\newtheorem{theorem}{Theorem}
\newtheorem{corollary}{Corollary}[theorem]
\DeclareMathOperator*{\argmax}{argmax}
\newcommand{\binspec}{s}
\newcommand{\hCard}{|H|}
\newcommand{\distp}{\mbox{p}}
\newcommand{\bigo}{\mathcal{O}}
\newcommand{\expect}{\mathbf{E}}
\newcommand{\pep}{x}
\newcommand{\bion}{b_{\pep}}
\newcommand{\yion}{y_{\pep}}
\newcommand{\reals}{\mathbb{R}}
\newcommand{\cR}{\mathbb{R}}
\newcommand{\candidatePeps}{D}
\newcommand{\pepDb}{\mathcal{D}}
\newcommand{\indicator}{\mathbf{1}}
\newcommand{\thomson}{\ensuremath{\mathsf{Th}}}
\newcommand{\obsSpec}{s}
\newcommand{\theoVector}{u}
\newcommand{\procObs}{z}
\newcommand{\cve}[1]{f_{\theta_{#1}}}
\newcommand{\premz}{m^{\obsSpec}}
\newcommand{\preCharge}{c^{\obsSpec}}
\newcommand{\maxth}{\bar{o}}
\newcommand{\pepUni}{\mathbb{P}}
\newcommand{\dideaMaxShift}{L}
\newcommand{\tauCard}{|\tau|}
\newcommand{\dideaShift}{\tau}
\newcommand{\dideaScore}{\psi}
\newcommand{\john}[1]{\textcolor{red}{(John: #1)}}
\newcommand{\john}[1]{}
\providecommand{\doshowproof}{false}
\newcommand{\showproof}[1]{#1}
\newcommand{\showproof}[1]{}
\title{Learning Concave Conditional Likelihood Models for Improved
Analysis of Tandem Mass Spectra}
\author{{\bf John T. Halloran} \\
Department of Public Health Sciences \\
University of California, Davis \\
\texttt{jthalloran@ucdavis.edu} \\
\And
{\bf David M. Rocke} \\
Department of Public Health Sciences \\
University of California, Davis \\
\texttt{dmrocke@ucdavis.edu}
}
\begin{document}

\maketitle

\begin{abstract}
The most widely used technology to identify the proteins present in a
complex biological sample is tandem mass
spectrometry, which quickly produces a large collection of
spectra representative of the \emph{peptides} (i.e., protein subsequences)
present in the original sample.
In this work, we greatly expand the
parameter learning capabilities of a dynamic Bayesian network (DBN)
peptide-scoring algorithm, Didea~\cite{singh2012-didea-uai}, by deriving
emission distributions for which its conditional log-likelihood scoring function
remains concave.  We show that this class of emission
distributions, called \emph{Convex Virtual Emissions} (CVEs),
naturally generalizes the log-sum-exp function while rendering both maximum
likelihood estimation and conditional maximum likelihood estimation
concave for a wide range of Bayesian networks.  Utilizing CVEs in
Didea allows efficient
learning of a large number of parameters while ensuring global
convergence, in stark contrast to Didea's previous parameter learning
framework (which could only learn a single parameter using a costly
grid search) and other trainable models~\cite{halloran2014uai-drip,
  halloran2016dynamic, halloran2017gradients} (which only ensure
convergence to local optima).  The newly trained scoring function
substantially outperforms the state-of-the-art in
both scoring function accuracy and downstream Fisher kernel
analysis. Furthermore, we significantly improve
Didea's runtime performance through successive optimizations to its
message passing schedule and
derive explicit connections between Didea's new concave score and
related MS/MS scoring functions.
\end{abstract}

\section{Introduction}
A fundamental task in medicine and biology is identifying the proteins
present in a complex biological sample, such as a drop of blood.  The
most widely used technology to accomplish this task is \emph{tandem mass
spectrometry} (\emph{MS/MS}), which quickly produces a large collection of
spectra representative of the peptides (i.e., protein subsequences)
present in the original sample.  A critical problem in MS/MS analysis,
then, is the accurate identification of the peptide generating each observed spectrum.  
\begin{figure}[htbp!]
\centering
\includegraphics[trim=0.9in 0.0in 1.0in 0.4in,clip=true,
width=0.8\linewidth]{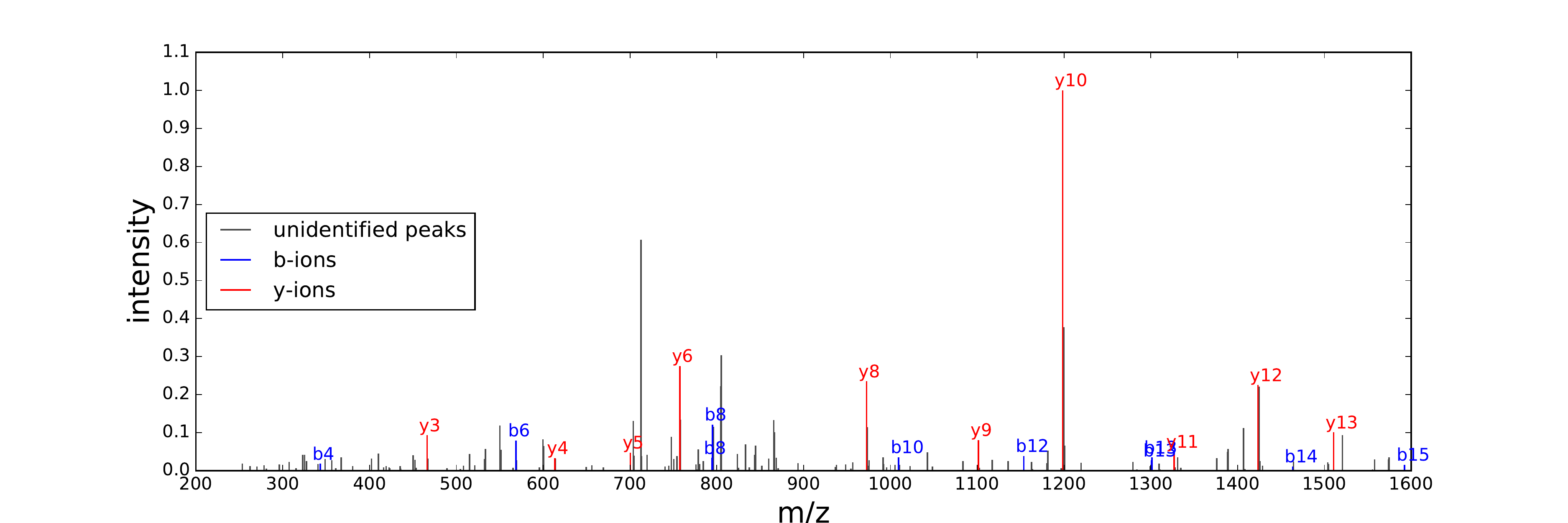}
\caption{{\small Example tandem mass spectrum with precursor charge
  $\preCharge=2$ and generating peptide $\pep =
  \mbox{TGPSPQPESQGSFYQR}$.  Plotted in red and blue are, respectively,
  b- and y-ion peaks (discussed in
  Section~\ref{section:dbsearch}), while unidentified peaks
  are colored gray.}}
\label{fig:exampleSpectrum}
\end{figure}

The most accurate methods which solve this
problem search a database of peptides derived from the mapped
organism of interest.  Such database-search algorithms score peptides
from the database and return the top-ranking peptide per spectrum.
The pair consisting of an observed spectrum and scored peptide
are typically referred to as a \emph{peptide-spectrum match}
(\emph{PSM}).  Many scoring functions have been proposed, ranging from
simple dot-products~\cite{craig:tandem, wenger2013proteomics}, to
cross-correlation based~\cite{eng:approach}, to $p$-value
based~\cite{geer:omssa, kim:msgfPlus,
 howbert:computing}.   Recently, 
dynamic Bayesian networks (DBNs) have been shown to achieve the
state-of-the-art in both PSM identification accuracy and post-search
discriminative analysis, owing to their temporal modeling
capabilities, parameter learning capabilities, and generative nature.



The first such DBN-based scoring function,
Didea~\cite{singh2012-didea-uai}, used sum-product inference to
efficiently compute a log-posterior score for highly accurate PSM
identification.  
However, Didea utilized a
complicated emission distribution for which only a single
parameter could be learned through a costly grid search.
Subsequently, a DBN for Rapid Identification of
Peptides (DRIP)~\cite{halloran2016dynamic, halloran2018analyzing},
was shown in~\cite{halloran2014uai-drip} to outperform Didea due to
its ability to generatively learn a large number of model parameters.
Most recently, DRIP's generative nature was further exploited to
derive log-likelihood gradients detailing the manner in which peptides
align with observed spectra~\cite{halloran2017gradients}.  Combining
these gradients with a discriminative
postprocessor~\cite{kall:semi-supervised}, the resulting DRIP Fisher
kernel substantially improved upon all state-of-the-art methods for
downstream analysis on a large number of datasets.  

However, while DRIP significantly improves several facets of MS/MS
analysis due to its parameter learning capabilities, these
improvements come at high runtime cost.  In practice, DRIP
inference is slow due to its large model complexity
(the state-space grows exponentially in the lengths of both the
observed spectrum and peptide).  For instance, DRIP
search required an order of magnitude longer than the slowest
implementation of Didea for the timing tests in
Section~\ref{section:fasterResults}.  Herein, we greatly improve upon
all the analysis strengths provided by DRIP using the much faster
Didea model.  Furthermore, we optimize Didea's message passing
schedule for a $64.2\%$ speed improvement, leading to runtimes two
orders of magnitude faster than DRIP and comparable to less
accurate(but widely used) methods.  Thus, the work described herein
not only improves upon state-of-the-art DBN analysis for effective
parameter learning, scoring function accuracy, and downstream Fisher
kernel recalibration, but also renders such analysis practical by
significantly decreasing state-of-the-art DBN inference time.

In this work, we begin by discussing relevant MS/MS background and
previous work.  We then greatly expand the parameter learning
capabilities of Didea by deriving a class of Bayesian network (BN)
emission distributions for which both maximum likelihood learning and,
most importantly, conditional maximum likelihood learning are
concave.  Called Convex Virtual Emissions (CVEs), we show that this
class of emission
distributions generalizes the widely used log-sum-exp function and
naturally arises from the solution of a nonlinear differential
equation representing convex conditions for general BN emissions.  We
incorporate CVEs into Didea to quickly and
efficiently learn a substantial number of model parameters,
considerably improving upon the previous learning
framework.  The newly trained model drastically improves PSM
identification accuracy, outperforming all state-of-the-art methods
over the presented datasets; at a strict FDR of $1\%$ and averaged
over the presented datasets, the trained
Didea scoring function identifies $16\%$ more spectra than DRIP and
$17.4\%$ more spectra than the highly accurate and widely used
MS-GF+~\cite{kim:msgfPlus}.  Under the newly parameterized model, we
then derive a bound explicitly relating Didea's score to the popular
XCorr scoring function, thus providing potential avenues to train
XCorr using the presented parameter learning work.

With efficient parameter learning in place, we next utilize
the new Didea model to improve MS/MS recalibration performance.  We
use gradient information derived from Didea's conditional
log-likelihood in the feature-space of a kernel-based
classifier~\cite{kall:semi-supervised}.  Training the resulting
conditional Fisher kernel substantially improves upon the
state-of-the-art recalibration performance previously achieved by
DRIP; at a strict FDR of $1\%$, discriminative recalibration using
Didea's conditional Fisher kernel results in an average $11.3\%$ more
identifications than using the DRIP Fisher kernel.  Finally, we
conclude with a discussion of several avenues for future work.

\section{Tandem mass spectrometry}\label{section:msms}
With a complex sample as input, a typical MS/MS experiment begins by cleaving the
proteins of the sample into peptides using a digesting enzyme, such as
trypsin.  The digested peptides are then separated via liquid
chromatography and undergo two rounds of mass spectrometry.  The first
round of mass spectrometry measures the mass and charge of the intact
peptide, referred to as the \emph{precursor mass} and \emph{precursor
  charge}, respectively.  Peptides are then fragmented into prefix and
suffix ions.  The mass-to-charge (\emph{m/z}) ratios of the resulting
fragment ions are measured in the second round of mass spectrometry,
producing an observed spectrum of m/z versus intensity values
representative of the fragmented peptide.  The output of this overall
process is a large collection of spectra (often numbering in the
hundreds-of-thousands), each of which is representative of a peptide
from the original complex sample and requires identification.  The
x-axis of such observed spectra denotes m/z, measured in thomsons
(\thomson), and y-axis measures the intensity at a particular m/z
value.  A
sample such observed spectrum is illustrated in
Figure~\ref{fig:exampleSpectrum}.

\subsection{Database search and theoretical
  spectra}\label{section:dbsearch}
Let $s \in S$ be an observed spectrum with precursor m/z $\premz$ and
precursor charge $\preCharge$, where $S$ is the universe of tandem
mass spectra.  The generating peptide of $s$ is
identified by searching a database of peptides, as follows.
Let $\pepUni$ be the universe of all peptides and $x \in \pepUni$ be an
arbitrary peptide of length $l$.  $x = x_1\dots x_l$ is a
string comprised of characters called \emph{amino acids}, the
dictionary size of which are 20.  We denote peptide substrings as
$x_{i:j} = x_i,\dots,x_j$, where $i > 0, j \leq l, i<j$, and the mass of
$x$ as $m(x)$.  Given a
peptide database $\pepDb \subseteq \pepUni$, the set of peptides
considered is constrained to those within a precursor
mass tolerance window $w$ of $\premz$.  The set of \emph{candidate
peptides} to be scored is thus
$\candidatePeps (s, \pepDb, w) = \{ x: x \in \pepDb,
|\frac{m(x)}{\preCharge} - \premz| \leq w \}$.  Using a scoring
function $\psi: \pepUni \times S \to \reals$, a database search
outputs the top-scoring PSM,  $x^* = \argmax_{x \in \pepDb}\psi(x,s)$.


In order to score a PSM, the idealized fragment ions of $x$
are first collected into a theoretical
spectrum.  The most commonly encountered fragment ions are called
\emph{b-ions} and \emph{y-ions}.  
B- and y-ions
correspond to prefix and suffix mass pairs, respectively, such that
the precursor charge $c^s$ is divided amongst the pair.  For b-ion
charge $c_b \leq \preCharge$, the $k$th b-ion and the accompanying
y-ion are then, respectively,
{\small
\begin{align*}
b(m(x_{1:k}),c_b) =& \frac{m(x_{1:k}) + c_b}{c_b} = \frac{\left [\sum_{i =
    1}^k m(x_i)\right] + c_b}{c_b}, \;\;\; y(m(x_{k+1:l}),c_y) =\frac{\left[\sum_{i = k+1}^l m(x_i)\right] + 18+ c_y}{c_y},
\end{align*}}
where $c_y$ is the y-ion charge, the b-ion offset corresponds to a
$c_b$ charged hydrogen atom, and the y-ion offset corresponds to a
$c_y$ charged hydrogen atom plus a water molecule. For singly charged
spectra $\preCharge=1$, only singly charged fragment ions are
detectable, so that $c_b = c_y = 1$.  For higher precursor
charge states $\preCharge \geq 2$, the total charge is split between
each b- and y-ion pair, so that $0 < c_b < \preCharge$ and
$c_y = \preCharge - c_b$.  The annotated b- and y-ions of an
identified observed spectrum are illustrated in
Figure~\ref{fig:exampleSpectrum}.




\section{Previous work}\label{section:previousWork}
Many database search scoring algorithms have been proposed, each of
which is characterized by the scoring function they employ.  These
scoring functions have ranged from
dot-products (X!Tandem~\cite{craig:tandem} and
Morpheus~\cite{wenger2013proteomics}), to cross-correlation based
(XCorr~\cite{eng:approach}), to exact $p$-values computed
over linear scores~\cite{kim:msgfPlus, howbert:computing}.  Recently,
DBNs have been used to substantially improve upon the accuracy of
previous approaches.

In the first such DBN, Didea~\cite{singh2012-didea-uai}, 
the time series being modeled is the sequence of a peptide's
amino acids (i.e., an amino acid is observed in each frame)
and the quantized observed spectrum is observed in each frame.  In
successive frames, the sequence of b- and y-ions are computed and used
as indices into the observed spectrum via virtual
evidence~\cite{pearl:probabilistic}.    A hidden
variable in the first frame, corresponding to the amount to
shift the observed spectrum by, is then marginalized in order to compute a conditional
log-likehood probability consisting of a foreground score minus a
background score, similar in form to XCorr
(described in Section~\ref{section:xcorr}).  The resulting scoring function
outperformed the most accurate scoring algorithms at the time
(including MS-GF+, then called MS-GFDB) on a majority of datasets.
However, parameter learning in the model was severely limited and
inefficient; a single hyperparameter controlling the reweighting of
peak intensities was learned via an expensive grid search, requiring
repeated database searches over a dataset.

Subsequent work saw the introduction of
DRIP~\cite{halloran2014uai-drip, halloran2016dynamic}, a DBN
with substantial parameter
learning capabilities.  In DRIP, the time
series being modeled is the sequence of observed
spectrum peaks (i.e., each frame in DRIP corresponds to an observed
peak) and two types of prevalent phenomena are explicitly modeled via
sequences of random variables: spurious observed peaks (called
insertions) and absent theoretical peaks (called deletions).  A
large collection of Gaussians parameterizing the m/z axis are
generatively learned,
via expectation-maximization (EM)~\cite{dempster:maximum}, and used to
score observed peaks.  DRIP then uses max-product inference to
calculate the most probable sequences of insertions and deletions in
order to score PSMs.

In practice, the majority of PSM scoring functions discussed are
typically poorly calibrated, i.e., it is often difficult to compare
the PSM scores across different spectra.  In order to combat such poor
calibration, postprocessors are commonly employed to recalibrate PSM
scores~\cite{kall:semi-supervised, spivak:improvements,
  spivak:direct}.  In recent work, DRIP's generative framework was
further exploited to calculate highly accurate features based on the
log-likelihood gradients of its learnable parameters.  Combining these
new gradient-based features with a popular kernel-based classifier for
recalibrating PSM scores~\cite{kall:semi-supervised}, the resulting
Fisher kernel was shown to significantly improve postprocessing
accuracy~\cite{halloran2017gradients}.

\section{Didea}\label{section:didea}
\begin{figure}[htbp!]
\begin{center}
\includegraphics[page=9,trim=0.2in 3.4in 0.45in 0.0in,clip=true,
width=0.75\linewidth]{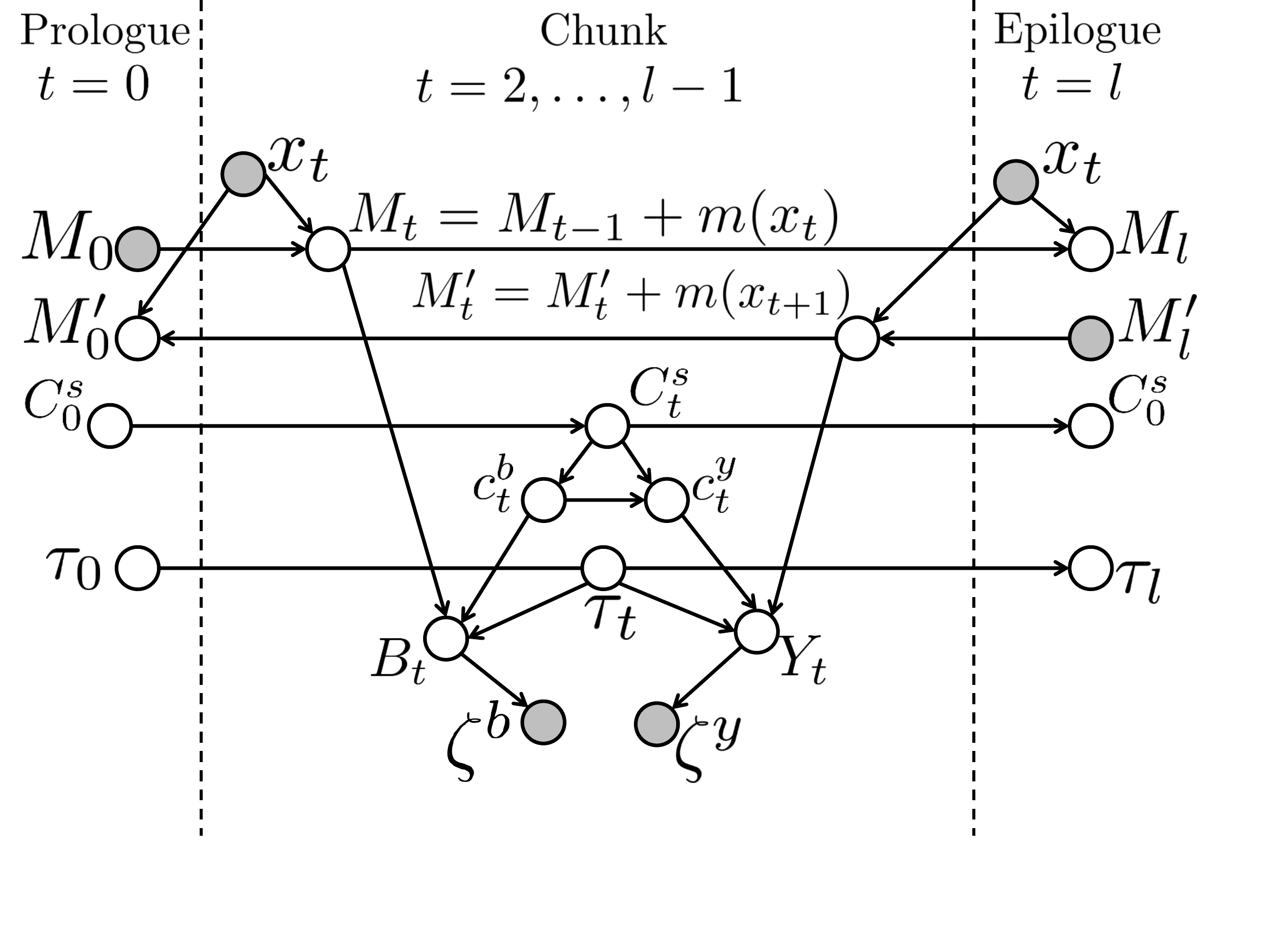}
\end{center}
\caption{{\small Graph of Didea.  Unshaded nodes are hidden, shaded nodes are
  observed, and edges denote deterministic functions of parent variables.}}
\label{fig:dideaGraph}
\end{figure}
We now derive Didea's scoring function in detail.  The graph of Didea is displayed in
Figure~\ref{fig:dideaGraph}.  Shaded variables are observed and
unshaded variables are hidden (random).  Groups of variables are
collected into time instances called \emph{frames}, where the first
frame is called the prologue, the final frame is called the epilogue,
and the \emph{chunk} dynamically expands to fill all frames
in between.
  Let $0 \leq t
\leq l$ be an arbitrary frame.  The amino acids
of a peptide are observed in each frame after the prologue.  The variable
$M_t$ successively accumulates the prefix masses of the peptide such
that $p(M_0 = 0) = 1$ and
$p(M_t = M_{t-1}+m(x_t) | M_{t-1}, x_t) = 1$, while
the variable $M_t'$ successively accumulates the suffix masses of the
peptide such that $p(M_l' = 0) = 1$ and $p(M'_t = M'_{t+1}+m(x_{t+1}) |
M'_{t+1}, x_{t+1}) = 1$.  Denoting the maximum spectra shift as
$\dideaMaxShift$, the shift variable $\tau_0\in
[-\dideaMaxShift,\dideaMaxShift]$ is hidden, uniform, and
deterministically copied by its descendents in successive frames, such
that $p(\tau_t =\bar{\tau}| \tau_0= \bar{\tau}) = 1$ for $t>1$.

Let $\binspec \in \reals^{\maxth + 1}$ be the binned observed
spectrum, i.e., a vector of length $\maxth + 1$ whose $i$th element is
$s(i)$, where $\maxth$ is the maximum observable
discretized m/z value.  Shifted versions of the
$t$th b- and y-ion pair (where the shift is denoted by subscript) are
deterministic functions of the shift variable as well as prefix and
suffix masses, i.e., 
$ p(B_t = b_{\tau_t}(M_t,1) | M_t, \tau_t) = p(B_t =
\max(\min(b(M_t,1) - \tau_t, 0), \maxth) | M_t, \tau_t), p(Y_t =
y_{\tau_t}(m(x_{t+1:l}), 1) | M'_{t}, \tau_t) =  p(Y_t =
\max(\min(y(m(x_{t+1:l}), 1) - \tau,0),\maxth) | M'_{t}, \tau_t)$,
respectively.  $\xi^b$ and $\xi^y$ are \emph{virtual evidence
  children}~\cite{pearl:probabilistic}, i.e., leaf nodes whose conditional distribution need not
be normalized (only non-negative) to compute posterior probabilities
in the DBN.  A comprehensive overview of virtual evidence is available
in~\cite{halloranThesis2016}.  $\xi^b$ and $\xi^y$ compare the b- and
y-ions, respectively, to the observed spectrum, such that $p(\xi^b |
B_t) = f(\binspec (b_{\tau_t}(m(x_{1:t}), 1))), p(\xi^y | Y_t) =
f(\binspec (y_{\tau_t}(m(x_{t+1:l}), 1)))$, where $f$ is a non-negative emission
function.

Let $\indicator_{\{\cdot\}}$ denote the indicator function.  Didea's
log-likelihood is then
$\log \distp ( \dideaShift_0=\bar{\tau},\pep,
s)$
{\small
\begin{align*}
=&\log{p(\tau_0 = \bar{\tau})p(M_0)p(M'_0| M'_1,
x_1)p(M'_l)p(M_l| M_{l-1},x_1)} +\\
&\,\log\prod_{t=1}^l[p(\tau_t |
\tau_{t-1})p(M_t| M_{t-1},x_t)p(M'_t| M'_{t+1},
x_{t+1})p(B_t | M_t, \tau_t)p(Y_t|M'_t,
\tau_t)p(\xi^b | B_t)p(\xi^y | Y_t)]\\
=&\log p(\tau_0 = \bar{\tau})+
\log \prod_{t=1}^{l-1} \bigl(\indicator_{\{ \tau_t = \bar{\tau} \wedge M_t
  = m(x_{1:t}) \wedge M'_t = m(x_{t+1:l}) \}}p(B_t | M_t, \tau_t)p(Y_t|M'_t,
\tau_t)p(\xi^b | B_t)p(\xi^y | Y_t) \bigr)\\
=&\log p(\tau_0 = \bar{\tau})+
\log \prod_{t=1}^{l-1}p(\xi^b | b_{\bar{\tau}}(M_t,
  1))p(\xi^y |y_{\bar{\tau}}(M'_t, 1))]\\
=&\log p(\tau_0 = \bar{\tau}) + \sum_{t=1}^{l-1} \bigl (\log
f(\binspec _{\bar{\tau}}(b(m(x_{1:t}),
1)))+ \log f(\binspec _{\bar{\tau}}(y(m(x_{t+1:l}),1))) \bigr).
\end{align*}
}

In order to score PSMs, Didea computes the conditional
log-likelihood
{\small
\begin{align}
\dideaScore(\obsSpec,
\pep)&= \log{\distp
  (\dideaShift_0=0 | \pep, \binspec})=\log{\distp ( \dideaShift_0=0,\pep, \binspec)} -
\log{\sum_{\bar{\tau}=-\dideaMaxShift}^\dideaMaxShift\distp(\dideaShift_0=\bar{\tau}) \distp(\pep,
  \binspec_{\bar{\tau}}|\dideaShift_0=\bar{\tau})}\nonumber\\
& =\log{\distp ( \dideaShift_0=0,\pep, \binspec)} -
\log{\frac{1}{|\dideaShift_0|}\sum_{\bar{\tau}=-\dideaMaxShift}^\dideaMaxShift\distp(\pep,
  \binspec_{\dideaShift}|\dideaShift_0=\bar{\tau})}.\label{eq:dideaScoringFunction}
\end{align}}
As previously mentioned, $\dideaScore(\obsSpec,
\pep)$ is a foreground score minus a background score, where the
background score consists of averaging over $|\tau_0|$
shifted versions of the foreground score, much like the XCorr scoring
function.  Thus, Didea may be thought of as a probabilistic analogue
of XCorr.

\subsection{Convex Virtual Emissions for Bayesian
  networks}\label{section:cves}
Consider an arbitrary Bayesian network where the observed variables are leaf
nodes, as is common in a large number of time-series models such as hidden Markov
models (HMMs), hierarchical
HMMs~\cite{murphy2002linear},  DBNs for
speech recognition~\cite{deng2006dynamic}, hybrid
HMMs/DBNs~\cite{dahl2012context}, as well as DRIP and Didea.  Let $E$
be the set of observed random variables, $H$ be the hypothesis space
composed of the cross-product of the $n$ hidden discrete random
variables in the network, and $h \in H$ be an arbitrary
\emph{hypothesis} (i.e., an instantiation of the hidden variables).
As is the case in Didea, often desired is the log-posterior
probability $\log p(h | E) = \log \frac{p(h,
  E)}{p(E)} = \log \frac{p(h,
  E)}{\sum_{\bar{h} \in H} p(\bar{h},E)} = \log p(h, E) - \log \sum_{h \in H}
  p(h)p(E | h).$
Under general assumptions, we'd like to find emission functions for
which $\log p(h | E)$ is concave.

Assume $p(h)$ and $p(E | h)$ are non-negative, that the emission
density $p(E | h)$ is parameterized
by $\theta$ (which we'd like to learn), and that there is a parameter
$\theta_h$ to be learned for every hypothesis of latent variables
(though if we have fewer parameters, parameter estimation becomes
strictly easier).  We make this parameterization explicit by denoting
the emission distributions of interest as $p_{\theta_h}(E | h)$.
Assume that $p_{\theta_h}(E | h)$ is smooth on $\cR$ for all $h \in
H$.  Applying virtual evidence for such models, $p_{\theta_h}(E | h)$
need not be normalized for posterior inference (as well as Viterbi inference and
comparative inference between sets of observations).

Given the factorization of the joint
distribution described by the BN, the quantity $p_{\theta_h}(h,E) = p(h)p_{\theta_h}(E|h)$
may often be efficiently computed for any given $h$.  Thus, the
computationally difficult portion of
$\log p(h | E)$ is the calculation of the log-likelihood in
the denominator, wherein all hidden
variables are marginalized over.  We therefore first seek
emission functions for which the log-likelihood $\log p(E) = \log \sum_{h \in H}
  p(h)p_{\theta_h}(E | h)$ is convex.  For such emission functions, we
  have the following theorem.
\begin{theorem}\label{theorem:generalizedLse}
The unique convex functions of the form $\log \sum_{h \in H}
p(h)p_{\theta_h}(E|h)$, such that $(p'_{\theta_h}(E|h))^2 -
p''_{\theta_h}(E|h)p_{\theta_h}(E|h) = 0$,  are $\log \sum_{h \in H}
p(h)p_{\theta_h}(E|h) = \log \sum_{h \in H}\alpha_h
e^{\beta_h\theta_h}$, where $\alpha_h = p(h)a_h$ and $a_h, \beta_h$
are hyperparameters.
\end{theorem}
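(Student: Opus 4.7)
The plan is to reduce the theorem to a one-variable ODE for each hypothesis, solve it, and then verify that the resulting expression is convex. Fix an arbitrary $h \in H$ and regard $p_{\theta_h}(E \mid h)$ as a smooth positive function of the single scalar parameter $\theta_h$ (with $E$ and $h$ held fixed). The constraint $(p'_{\theta_h}(E\mid h))^2 - p''_{\theta_h}(E\mid h)\, p_{\theta_h}(E\mid h) = 0$ is then an ODE in $\theta_h$.

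Wherever $p'_{\theta_h} \neq 0$, I would divide through by $p_{\theta_h} p'_{\theta_h}$ to obtain
\[
\frac{p''_{\theta_h}(E\mid h)}{p'_{\theta_h}(E\mid h)} = \frac{p'_{\theta_h}(E\mid h)}{p_{\theta_h}(E\mid h)},
\]
i.e., $\frac{d}{d\theta_h}\log p'_{\theta_h}(E\mid h) = \frac{d}{d\theta_h}\log p_{\theta_h}(E\mid h)$. Integrating once gives $\log p'_{\theta_h} = \log p_{\theta_h} + C_h$ for some constant (in $\theta_h$) of integration, which rearranges to the linear first-order ODE $p'_{\theta_h}(E\mid h) = \beta_h \, p_{\theta_h}(E\mid h)$ with $\beta_h := e^{C_h}$. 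This integrates to $p_{\theta_h}(E \mid h) = a_h e^{\beta_h \theta_h}$, where $a_h$ and $\beta_h$ are the constants of integration (depending on $E$ and $h$, but not on $\theta_h$). The degenerate case $p'_{\theta_h} \equiv 0$ corresponds exactly to $\beta_h = 0$ (a constant solution), so the exponential family covers it; smoothness on $\cR$ rules out pathological piecewise constructions, giving uniqueness of the functional form.

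Substituting into the target expression yields
\[
\log \sum_{h \in H} p(h)\, p_{\theta_h}(E\mid h) = \log \sum_{h \in H} p(h)\, a_h\, e^{\beta_h \theta_h} = \log \sum_{h \in H} \alpha_h e^{\beta_h \theta_h},
\]
with $\alpha_h = p(h)\, a_h$, which is the claimed log-sum-exp form. Convexity then follows by appealing to the classical fact that the log-sum-exp of affine functions of the parameter vector $(\theta_h)_{h \in H}$ is convex: the exponents $\beta_h \theta_h + \log \alpha_h$ are each affine in the joint parameter vector, and the log-sum-exp is a convexity-preserving composition (see, e.g., Boyd and Vandenberghe).

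The main obstacle is not the ODE computation itself but the uniqueness claim: one must argue that no other functional form on $\cR$ can satisfy the ODE, and that within the class of forms satisfying it, the resulting log-sum is genuinely convex (and not merely convex along certain directions). Handling the boundary case $p'_{\theta_h} \equiv 0$ and ensuring the constants of integration can be absorbed into a single pair $(\alpha_h, \beta_h)$ per hypothesis are the subtle bookkeeping points; once these are dispatched, the theorem follows immediately from standard properties of log-sum-exp.
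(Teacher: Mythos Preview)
Your proposal is correct and reaches the same conclusion, but by a genuinely different and more economical route than the paper. The paper's proof computes the full Hessian of $\log\sum_h c_h f_h(\theta_h)$, rewrites the positive-semidefiniteness condition as an inequality of the form $(l^Tl)(v^Tv)\geq(u^Tv)^2$, and observes via Cauchy--Schwarz that it holds whenever $l=u$, which is exactly the ODE $(f_h')^2=f_h''f_h$; it then solves the ODE by the phase-space substitution $w(y)=y'(t(y))$, reducing to two separable first-order equations. You instead take the ODE as hypothesis (as the theorem statement does), solve it in one line by recognising $p''/p'=p'/p$ as $\tfrac{d}{d\theta_h}\log|p'|=\tfrac{d}{d\theta_h}\log p$, and then verify convexity by citing the standard log-sum-exp result. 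Your argument is shorter and entirely adequate for the theorem as stated; the paper's Hessian computation has the compensating virtue of explaining \emph{why} this particular ODE is the natural constraint, which is the motivation the surrounding text relies on.

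One small bookkeeping slip: writing $\beta_h:=e^{C_h}$ forces $\beta_h>0$, but the ODE also admits solutions with $p'<0$, corresponding to $\beta_h<0$. The fix is to integrate $\tfrac{d}{d\theta_h}\log|p'_{\theta_h}|=\tfrac{d}{d\theta_h}\log p_{\theta_h}$, giving $p'=\beta_h p$ with $\beta_h\in\reals\setminus\{0\}$, after which the constant case $\beta_h=0$ you already noted completes the picture.
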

The full proof of Theorem~\ref{theorem:generalizedLse} is given
in Appendix~\ref{appendix:cveProof}.  The nonlinear differential equation
$(p'_{\theta_h}(E|h))^2 - p''_{\theta_h}(E|h)p_{\theta_h}(E|h) = 0$
describes the curvature of the desired emission functions and arises
from the necessary and sufficient conditions for twice differentiable
convex functions (i.e., the Hessian must be p.s.d.) and the
Cauchy-Schwarz inequality.  Particular values of the
hyperparameters $a_h$ and $\beta_h$ correspond to unique initial
conditions for this nonlinear differential equation.
Note that when $\alpha_h = 1, p_{\theta_h}(E | h) = e^{\theta_h}$, we
have the well-known log-sum-exp (LSE) convex function.  Thus, this
result generalizes the LSE function to a broader class of convex
functions.

We call the unique class of convex functions which arise from solving
the nonlinear differential in Theorem~\ref{theorem:generalizedLse},
$p_{\theta_h}(E|h) = a_h e^{\beta_h\theta_h}$, Convex Virtual
Emissions (CVEs).  Note that utilizing CVEs, maximimum likelihood
estimation (i.e., $\argmax_{\theta} -\log \sum_{h \in H}
p(h)p_{\theta_h}(E|h)$) is thus concave and guaranteed to converge to a
global optimum.  
Furthermore, and most importantly for Didea, we have
the following result for the conditional log-likelihood (the full
proof of which is in Appendix~\ref{appendix:cveProof}).
\begin{corollary}\label{corollary1}
For convex $\log p(E) = \log \sum_{h \in H} p(h)p_{\theta_h}(E | h)$ such that
$(p'_{\theta_h}(E|h))^2 - p''_{\theta_h}(E|h)p_{\theta_h}(E|h) =
0$, the log-posterior $\log p_{\theta}(h | E)$ is concave in $\theta$.
\end{corollary}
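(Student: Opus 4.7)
The plan is to reduce the corollary directly to Theorem~\ref{theorem:generalizedLse} by decomposing the log-posterior into a part that is linear in $\theta$ and a part that is convex in $\theta$. Starting from Bayes' rule, I would write
\begin{equation*}
\log p_{\theta}(h \mid E) \;=\; \log p(h) \;+\; \log p_{\theta_h}(E \mid h) \;-\; \log \sum_{\bar h \in H} p(\bar h)\,p_{\theta_{\bar h}}(E \mid \bar h),
\end{equation*}
so that concavity of the left side in $\theta$ follows as soon as the first two (joint) terms are concave in $\theta$ and the last (marginal likelihood) term is convex in $\theta$.

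For the marginal likelihood term, the hypothesis of the corollary is precisely the differential equation appearing in Theorem~\ref{theorem:generalizedLse}, so I would simply invoke that theorem to conclude that $\log \sum_{\bar h} p(\bar h)p_{\theta_{\bar h}}(E\mid\bar h)$ is convex in $\theta$, and in fact has the explicit CVE form $\log\sum_{\bar h} \alpha_{\bar h} e^{\beta_{\bar h}\theta_{\bar h}}$. Subtracting a convex function gives a concave contribution to $\log p_\theta(h\mid E)$, as desired.

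For the joint piece $\log p(h) + \log p_{\theta_h}(E\mid h)$, the key observation is that Theorem~\ref{theorem:generalizedLse} pins down the emission density itself as $p_{\theta_h}(E\mid h) = a_h e^{\beta_h \theta_h}$. Taking the logarithm, the hypothesis-specific contribution $\log a_h + \beta_h \theta_h$ is \emph{linear} in the parameter vector $\theta$, since it depends only on the single coordinate $\theta_h$ and does so affinely. The prior term $\log p(h)$ does not involve $\theta$ at all. Hence the joint log-likelihood is affine (both concave and convex) in $\theta$.

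Combining these two observations, $\log p_\theta(h\mid E)$ is the sum of an affine function of $\theta$ and the negative of a convex function of $\theta$, and is therefore concave in $\theta$. The only nontrivial ingredient is Theorem~\ref{theorem:generalizedLse}; once its CVE characterization is in hand, the rest of the corollary is essentially bookkeeping, so the main obstacle is really upstream, in establishing the theorem itself (which the paper defers to the appendix). One minor subtlety worth flagging in the write-up is that concavity here is in the full parameter vector $\theta = (\theta_h)_{h\in H}$ rather than in any single $\theta_h$ alone, which is immediate from the joint term being coordinate-wise affine and the marginal term being jointly convex by Theorem~\ref{theorem:generalizedLse}.
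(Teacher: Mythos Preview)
Your proposal is correct and follows essentially the same route as the paper's proof: decompose the log-posterior via Bayes' rule, invoke Theorem~\ref{theorem:generalizedLse} to obtain both the convexity of the marginal log-likelihood and the explicit exponential form $p_{\theta_h}(E\mid h)=a_h e^{\beta_h\theta_h}$, observe that the joint term is therefore affine in $\theta$, and conclude. Your added remark about joint concavity in the full parameter vector $\theta$ is a nice clarification but does not depart from the paper's argument.
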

Thus, utilizing CVEs, conditional maximum likelihood estimation is
also rendered concave.

\subsection{CVEs in Didea}\label{section:veLse}
In~\cite{singh2012-didea-uai}, the virtual evidence emission function
to score peak intensities was $f_{\lambda}(\binspec (i)) = 1-\lambda
e^{-\lambda} + \lambda e^{-\lambda(1-\binspec(i))}$.  Under this function,
Didea was shown to perform well on a variety of datasets.  However,
this function is non-convex and does not permit efficient parameter
learning; although only a single model parameter, $\lambda$, was
trained, learning required a grid search wherein each step consisted
of a database search over a dataset and
subsequent target-decoy analysis to assess each new parameter value.
While this training scheme is already costly and impractical, it
quickly becomes infeasible when looking to learn more than a single
model parameter.

We use CVEs to
render Didea's conditional log-likelihood concave given a large number of
parameters.  To efficiently learn a distinct
observation weight $\theta_{\tau}$ for each spectral shift $\tau \in
[-L, L]$, we thus utilize the emission function $f_{\theta_\tau}(\binspec (i)) =
e^{\theta_{\tau} \binspec (i)}$. 
Denote the set of parameters per spectra shift as $\theta =
\{\theta_{-L}, \dots, \theta_L\}$.
Due to the concavity of
Equation~\ref{eq:dideaScoringFunction} using $f_{\theta_\tau}$ under
Corollary~\ref{corollary1}, given $n$ PSM training pairs $(s^1, x^1),
\dots, (s^n, x^n)$, the learned parameters $\theta^* =
\argmax_{\theta} \sum_{i = 1}^n\psi_{\theta}(s^i, x^i)$ are
guaranteed to converge to a global optimum.
Further analysis of Didea's scoring function under
this new emission function may be found
in Appendix~\ref{section:veLse}, including the derivation of the
new model's gradients (i.e., conditional Fisher scores).

\subsubsection{Relating Didea's conditional log-likelihood to XCorr
  using CVEs}\label{section:xcorr}
XCorr~\cite{eng:approach}, the very first database search scoring
function for peptide identification, remains one of the most widely
used tools in the field today.  Owing to its prominence, XCorr remains
an active subject of analysis and continuous
development~\cite{klammer:statistical, park:rapid, eng:fast,
  Diament2011, howbert:computing, mcilwain:crux, eng:comet,
  halloran2017gradients}.  As previously noted, the scoring functions
of XCorr and Didea share several similarities in form, where, in fact,
the former served as the motivating example
in~\cite{singh2012-didea-uai} for both the design of the
Didea model and its posterior-based scoring function.
While cosmetic similarities have thus far been noted, the
reparameterization of Didea's conditional log-likelihood using CVEs
permits the derivation of an explicit relationship between the two.

Let $u$ be the
theoretical spectrum of peptide $x$.  As with Didea, let $L$ be the
maximum spectra shift considered and,
for shift $\tau$, denote a vector shift via subscript, such that
$s_{\tau}$ is the vector of observed spectrum elements shifted by
$\tau$ units.  In order to compare $u$ and $\binspec$, XCorr is thus
computed as $\mbox{XCorr}(\binspec, \pep) = \theoVector^T\binspec-
\frac{1}{2L+1}\sum_{\tau=-L}^{L}\theoVector^T\binspec_{\tau}.$
Intuitively, the cross-correlation background term is meant to
penalize overfitting of the theoretical spectrum.  Under the newly
parameterized Didea conditional log-likelihood described herein, we
have the following theorem explicitly relating the XCorr and Didea
scoring functions. 

\begin{theorem}\label{theorem:xcorrDidea}
Assume the PSM scoring function $\dideaScore(s,x)$ is that of Didea
(i.e., Equation~\ref{eq:dideaScoringFunction}) where the emission
function $f_{\theta_{\tau}}(s(i))$ has uniform weights $\theta_{i} =
\theta_{j}$, for $i,j \in [-L, L]$.  Then $\dideaScore(s,x) \leq
\bigo(\mbox{XCorr}(\obsSpec, \pep))$.
\end{theorem}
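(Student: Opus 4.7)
The plan is to exploit how, under CVE emissions with uniform weights, Didea's conditional log-likelihood collapses to a log-sum-exp over inner products of the theoretical spectrum with shifted observed spectra, and then apply Jensen's inequality to recover XCorr's arithmetic-mean background up to an additive constant.

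First, I would substitute $f_{\theta_\tau}(\binspec(i)) = e^{\theta_\tau \binspec(i)}$ with $\theta_\tau = \theta$ into the closed-form expression for $\log \distp(\dideaShift_0 = \bar{\tau}, \pep, \binspec)$ derived at the end of Section~\ref{section:didea}. Each term in the sum over $t$ contributes $\theta\,\binspec_{\bar{\tau}}(b(m(x_{1:t}),1)) + \theta\,\binspec_{\bar{\tau}}(y(m(x_{t+1:l}),1))$. Recognizing the theoretical spectrum $\theoVector$ as the indicator vector over b- and y-ion positions of $\pep$, these sums collapse to the single inner product $\theta\,\theoVector^\top \binspec_{\bar{\tau}}$. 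Hence $\distp(\pep, \binspec_{\bar{\tau}} \mid \dideaShift_0 = \bar{\tau}) = e^{\theta \theoVector^\top \binspec_{\bar{\tau}}}$, modulo a prior factor.

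Next, I plug this identity into Equation~\ref{eq:dideaScoringFunction}. Under the uniform prior $\distp(\dideaShift_0 = \bar{\tau}) = 1/(2L+1)$, the $\log \distp(\dideaShift_0 = 0) = -\log(2L+1)$ in the numerator cancels the $-\log(2L+1)$ absorbed into the $\frac{1}{|\dideaShift_0|}$ factor in the background, leaving
\begin{equation*}
\dideaScore(\binspec, \pep) \;=\; \theta\,\theoVector^\top \binspec \;-\; \log \sum_{\bar{\tau} = -L}^{L} e^{\theta \theoVector^\top \binspec_{\bar{\tau}}}.
\end{equation*}
Finally, I apply Jensen's inequality to the log-mean-exp: $\log \tfrac{1}{2L+1} \sum_{\bar{\tau}} e^{\theta \theoVector^\top \binspec_{\bar{\tau}}} \geq \tfrac{\theta}{2L+1}\sum_{\bar{\tau}} \theoVector^\top \binspec_{\bar{\tau}}$. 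Rearranging yields
\begin{equation*}
\dideaScore(\binspec, \pep) \;\leq\; -\log(2L+1) + \theta\Bigl(\theoVector^\top \binspec - \tfrac{1}{2L+1}\sum_{\bar{\tau}=-L}^{L}\theoVector^\top \binspec_{\bar{\tau}}\Bigr) \;=\; -\log(2L+1) + \theta\cdot\mbox{XCorr}(\binspec, \pep),
\end{equation*}
which is exactly the claimed $\bigo(\mbox{XCorr}(\binspec, \pep))$ bound.

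The main obstacle I expect is bookkeeping rather than substance: aligning Didea's per-index shift convention (subtracting $\bar{\tau}$ from each b-/y-ion m/z index inside $B_t$ and $Y_t$) with XCorr's vector-shift convention $\binspec_{\tau}$, so that the telescoping identity $\sum_{t}[\binspec_{\bar{\tau}}(b_t) + \binspec_{\bar{\tau}}(y_t)] = \theoVector^\top \binspec_{\bar{\tau}}$ references the same indexing as in XCorr's definition. Once the shift conventions are verified to agree and the theoretical spectrum is identified as the b/y-ion indicator, the remainder is a one-line application of Jensen's inequality.
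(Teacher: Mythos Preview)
Your proposal is correct and follows essentially the same approach as the paper's own proof: both reduce Didea's CVE-parameterized score to the compact form $\theta\,\theoVector^\top \binspec - \log\sum_{\bar\tau} e^{\theta\,\theoVector^\top \binspec_{\bar\tau}}$ (with $\theoVector = b_x + y_x$) and then apply Jensen's inequality to the log-sum-exp background, arriving at the identical bound $-\log(2L+1) + \theta\cdot\mbox{XCorr}(\binspec,\pep)$. The only cosmetic difference is ordering---the paper applies Jensen in the abstract expectation form $\log \expect[\distp(\pep,\binspec_\tau\mid\tau)] \geq \expect[\log \distp(\pep,\binspec_\tau\mid\tau)]$ before expanding, whereas you expand to inner products first and then apply Jensen to the explicit log-mean-exp; the substance is identical.
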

The full proof of Theorem~\ref{theorem:xcorrDidea} may be found
in Appendix~\ref{section:xcorrbound}.
Thus, Didea's scoring
function effectively serves to lower bound XCorr.
This opens possible avenues for extending the learning results
detailed herein to the widely used XCorr function.  For instance, a
natural extension is to use a variational Bayesian inference approach and 
learn XCorr parameters through iterative maximization of the Didea
lower bound, made efficient by the concavity of new
Didea model derived in this work.

\subsection{Faster Didea sum-product inference}\label{section:faster}
We successively improved Didea's inference time
when conducting a database search using the intensive charge-varying
model (discussed in Appendix~\ref{section:chargeIntegration}).  Firstly, we
removed the need for a backward pass by keeping
track of the foreground log-likelihood during the forward pass (which
computes the background score, i.e., the probability of evidence in
the model).  Next, by exploiting the symmetry of the spectral shifts,
we cut the effective cardinality of $\tau$ in half during
inference. While this requires
twice as much memory in practice, this is not close to being prohibitive
on modern machines.  Finally, a large portion of the speedup was
achieved by offsetting the virtual evidence vector by $|\tau|$ and
pre/post buffering with zeros and offsetting each computed b- and y-ion by
$|\tau|$.  Under this construction, the scores do not change, but,
during inference, we are able to shift each computed b- and y-ion by
$\pm \tau$ without requiring any bound checking.  Hashing virtual
evidence bin values by b-/y-ion value and $\tau$ was also
pursued, but did not offer any runtime benefit over the
aforementioned speedups (due to the cost of constructing the hash
table per spectrum).

\section{Results}\label{section:results}
In practice, assessing peptide identification accuracy is made
difficult by the lack of ground-truth encountered in real-world data.
Thus, it is most common to estimate the \emph{false discovery rate}
(\emph{FDR})~\cite{benjamini:controlling} by searching a decoy
database of peptides which
are unlikely to occur in nature, typically generated by
shuffling entries in the
target database~\cite{keich2015improved}.  For a particular
score threshold, $t$, the FDR is calculated as the proportion of
decoys scoring better than $t$ to the number of targets scoring better
than $t$.  Once the target and decoy PSM scores are calculated, a
curve displaying the FDR threshold versus the number of correctly
identified targets at each given threshold may be calculated.  In
place of FDR along the x-axis, we use the
\emph{q-value}~\cite{keich2015improved},  defined to be the minimum
FDR threshold at which a
given score is deemed to be significant.  As many applications
require a search algorithm perform well at low thresholds, we only
plot $q \in [0, 0.1]$.  

\begin{figure*}[htbp!]
  \centering
  \subfigure{\raisebox{7.0mm}{\includegraphics[trim=3.8in 1.03in 0.3in 2.0in,
    clip=true,scale=0.68]{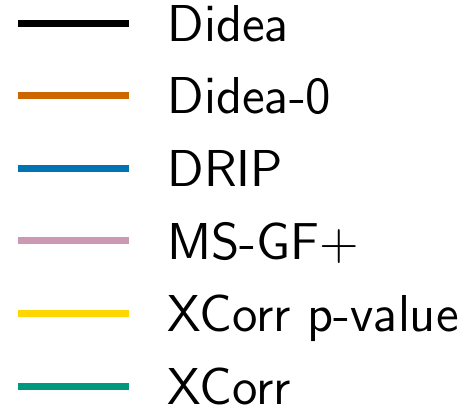}}}
  \subfigure[Worm-1]{\includegraphics[trim=0.0in 0.0in 0.0in 0.05in,
    clip=true,scale=0.30]{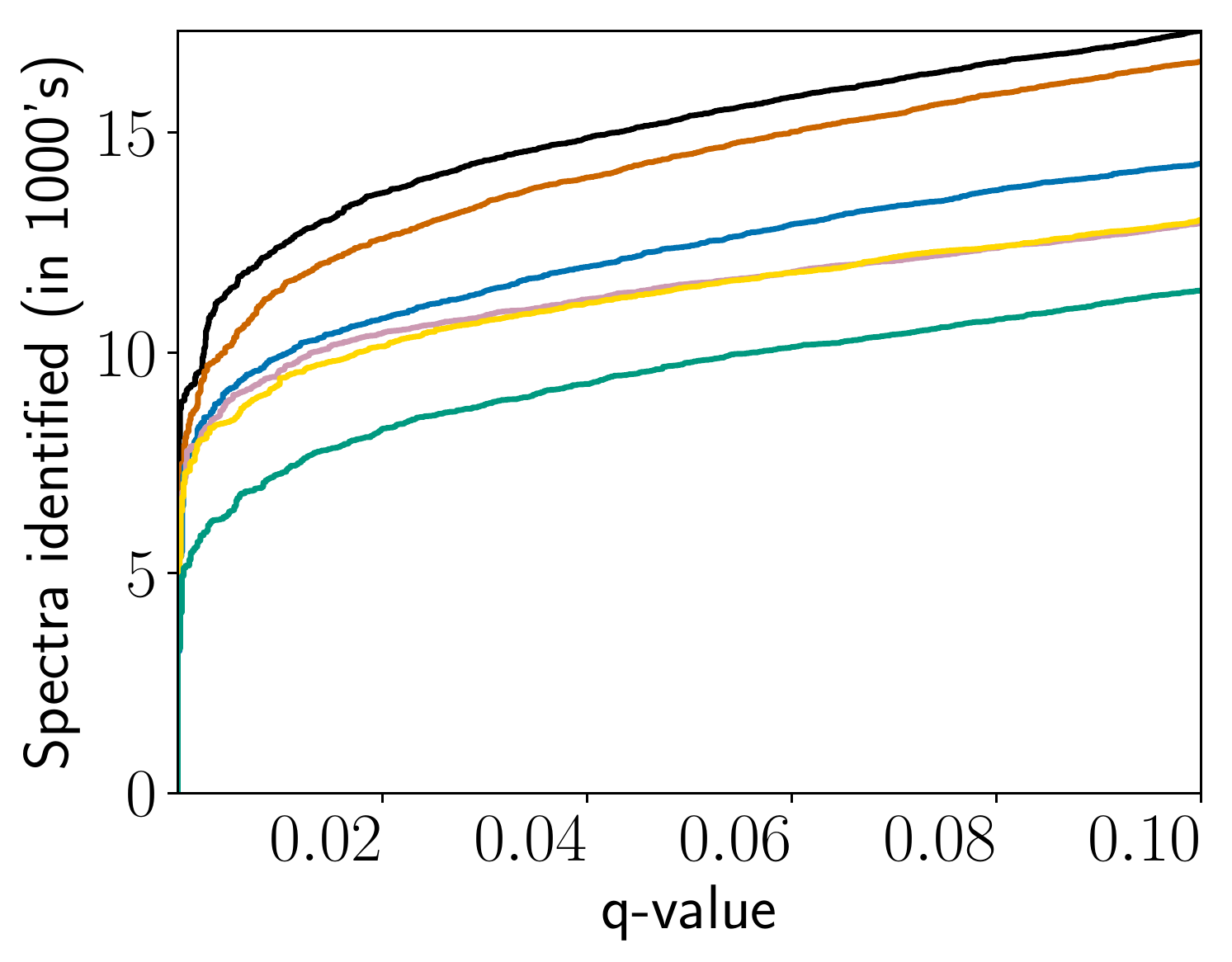}}
  \subfigure[Worm-2]{\includegraphics[trim=0.45in 0.0in 0.0in 0.05in,
    clip=true,scale=0.30]{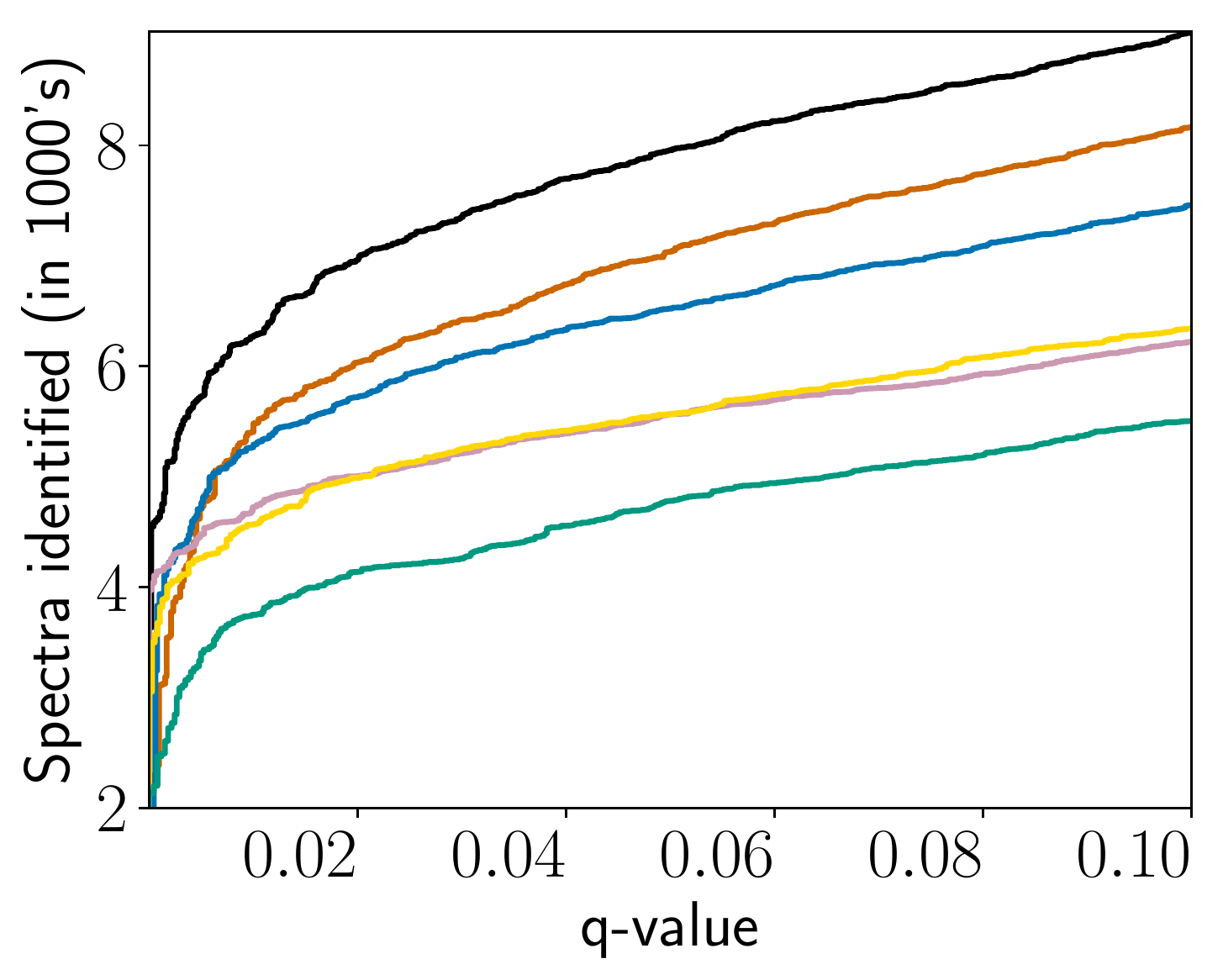}}
  \subfigure[Worm-3]{\includegraphics[trim=0.0in 0.0in 0.0in 0.05in,
    clip=true,scale=0.30]{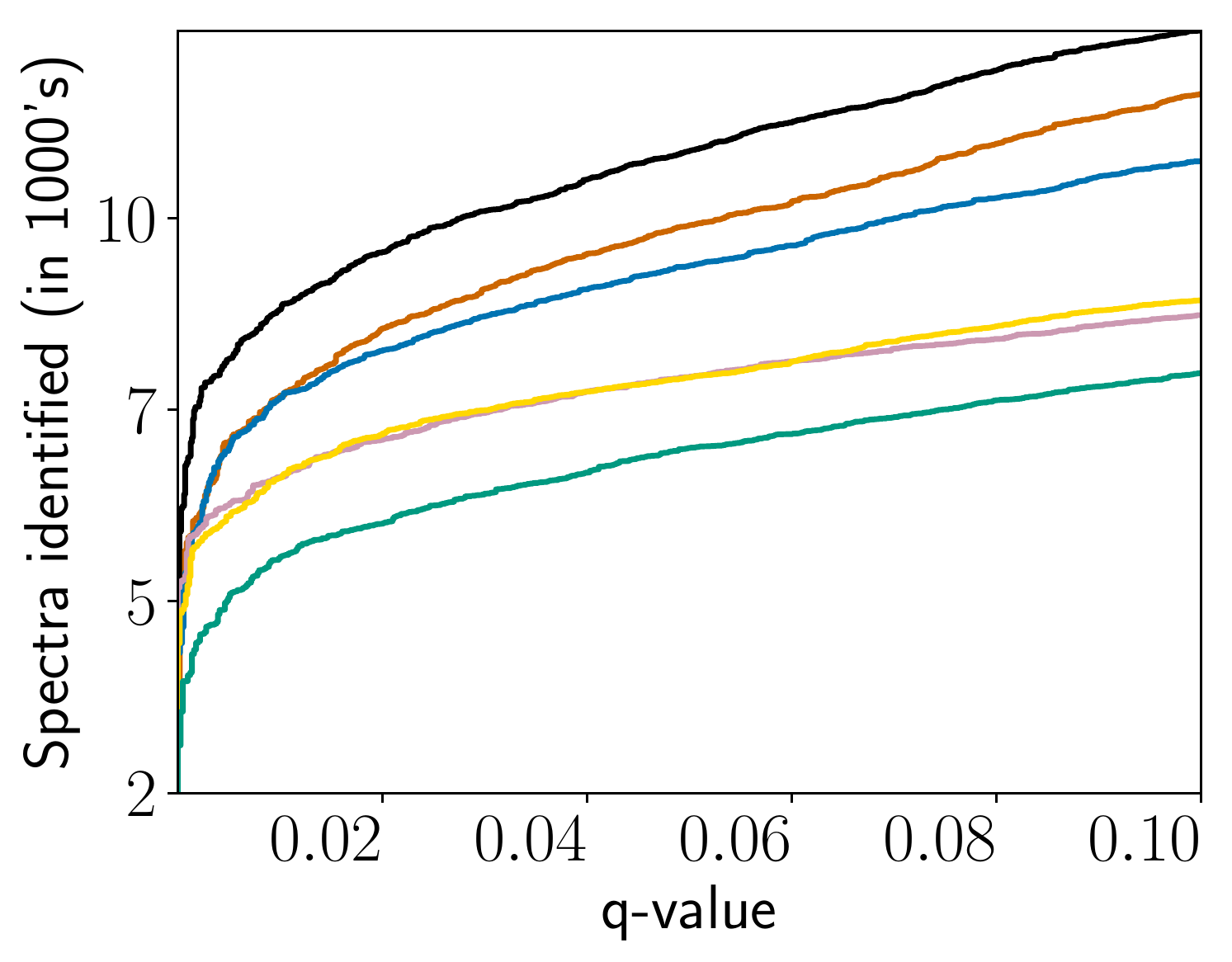}}
  \subfigure[Worm-4]{\includegraphics[trim=0.45in 0.0in 0.0in 0.05in,
    clip=true,scale=0.30]{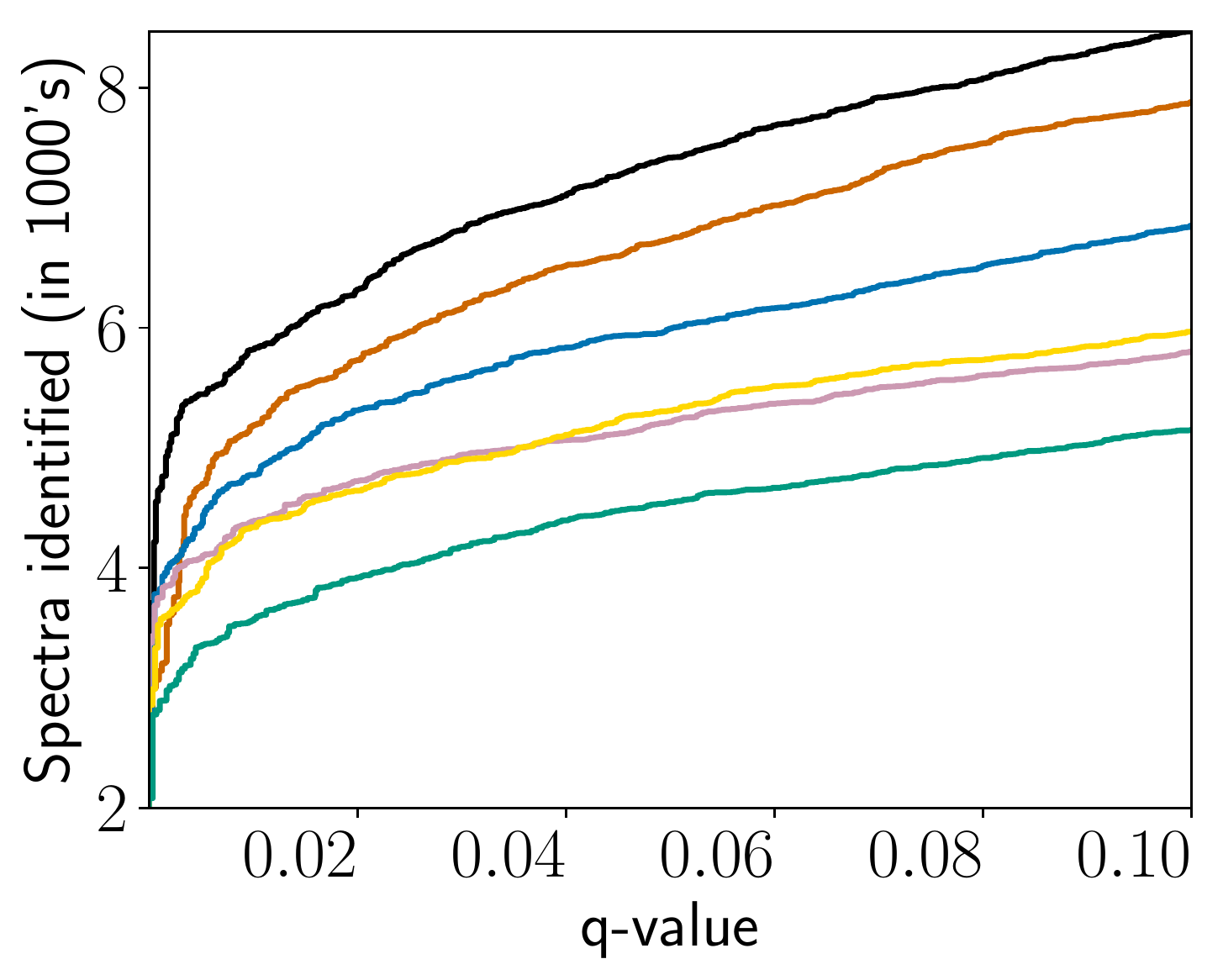}}
  \subfigure[Yeast-1]{\includegraphics[trim=0.45in 0.0in 0.0in 0.05in,
    clip=true,scale=0.30]{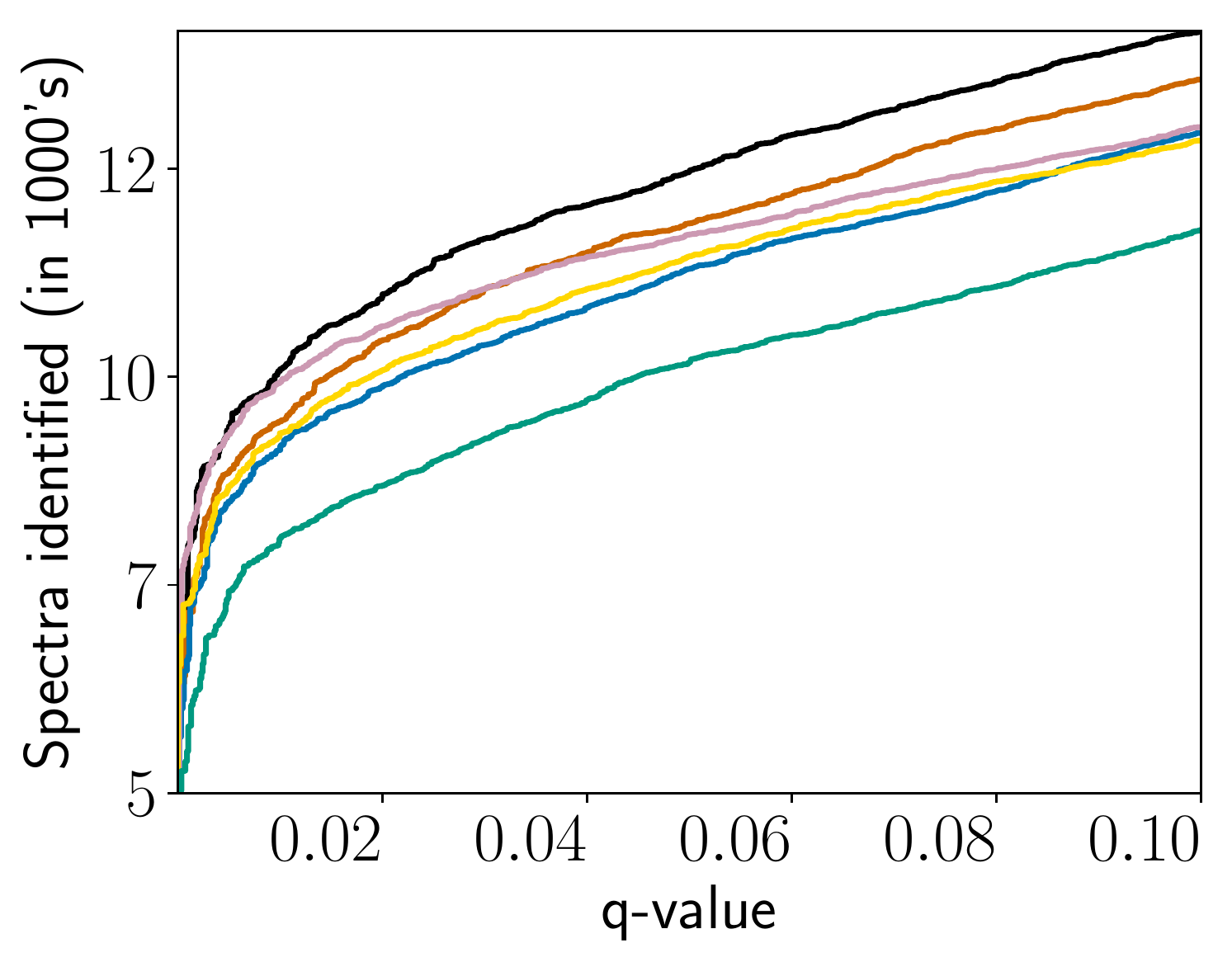}}
  \subfigure[Yeast-2]{\includegraphics[trim=0.0in 0.0in 0.0in 0.05in,
    clip=true,scale=0.30]{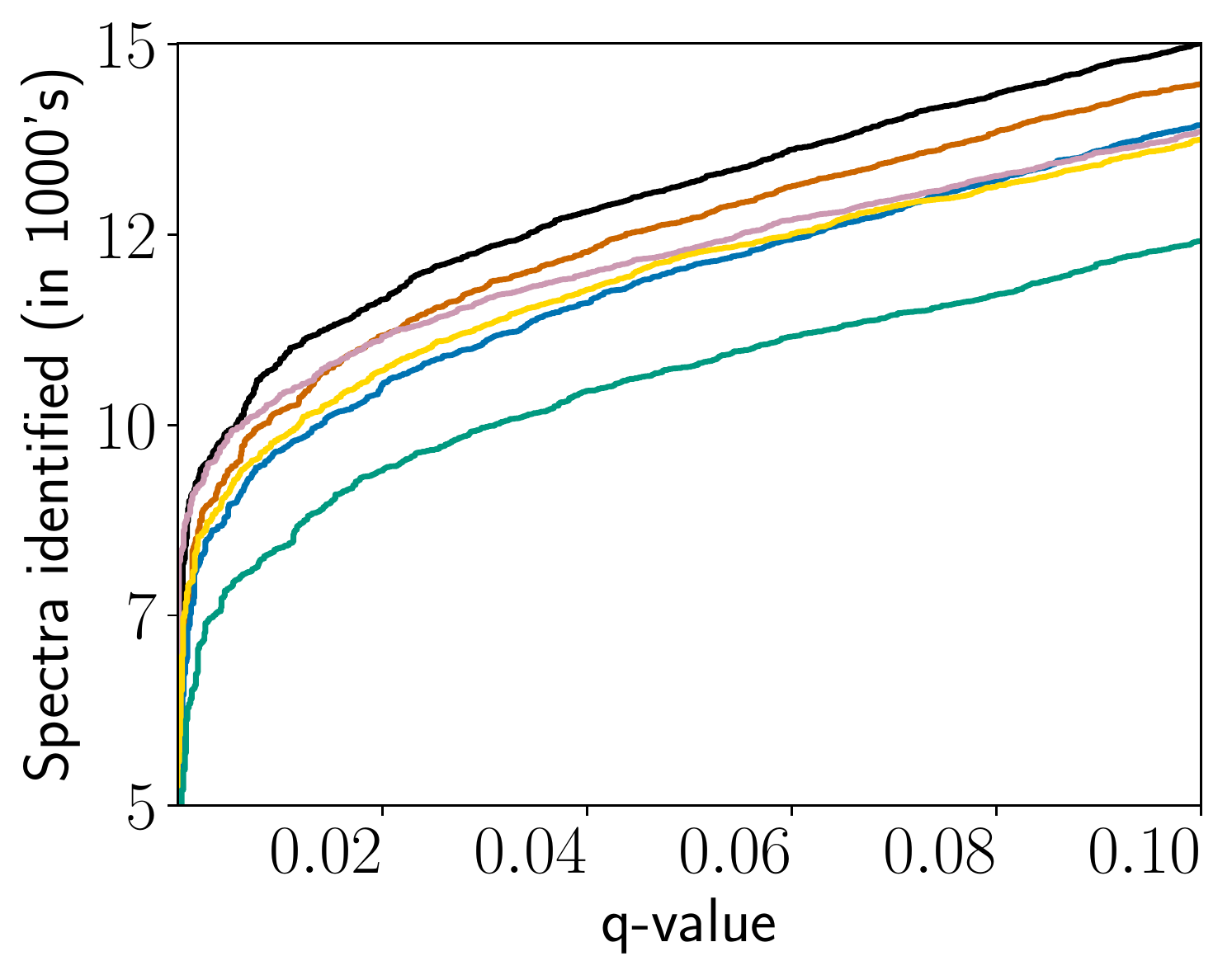}}
  \subfigure[Yeast-3]{\includegraphics[trim=0.45in 0.0in 0.0in 0.05in,
    clip=true,scale=0.30]{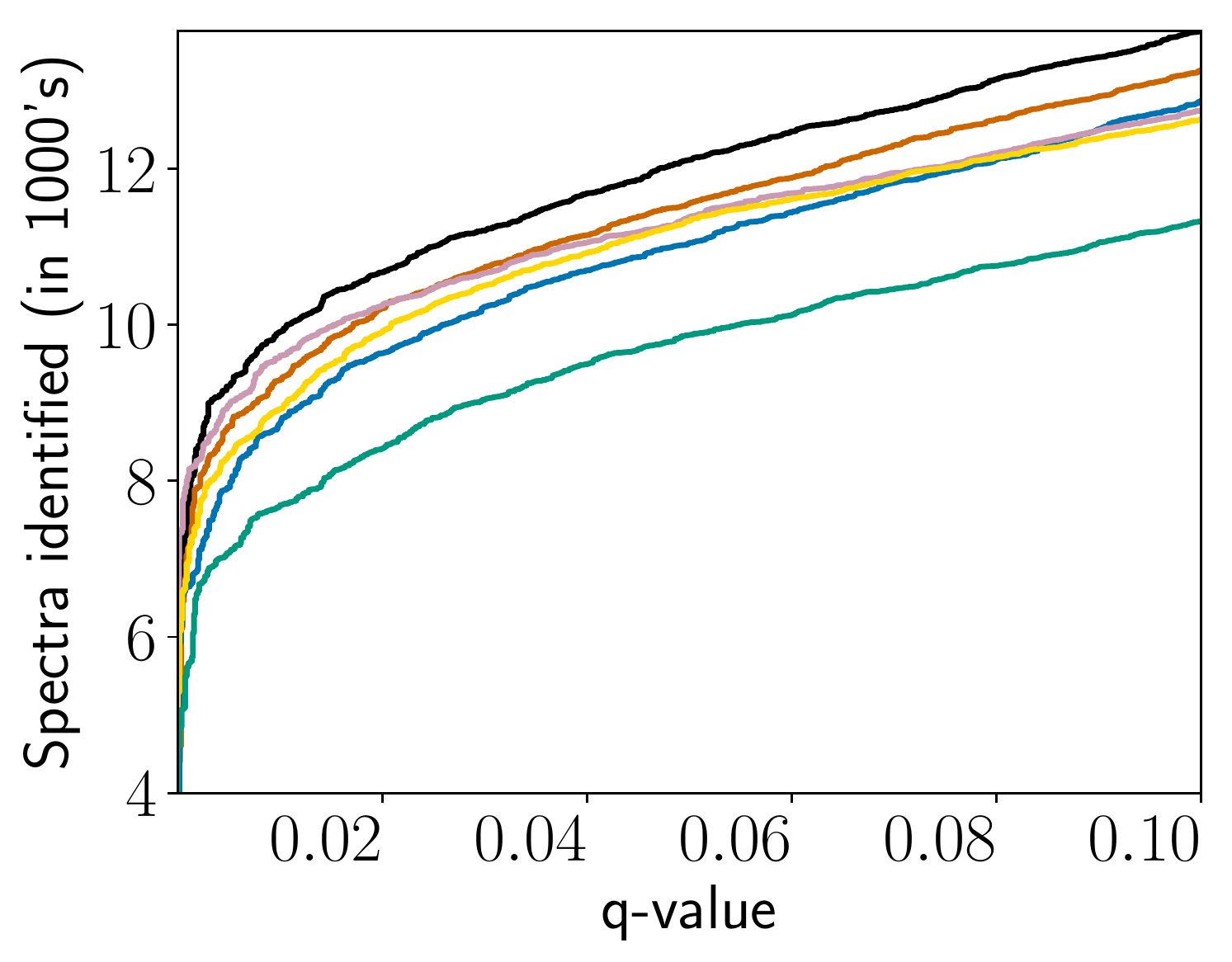}}
  \subfigure[Yeast-4]{\includegraphics[trim=0.45in 0.0in 0.0in 0.05in,
    clip=true,scale=0.30]{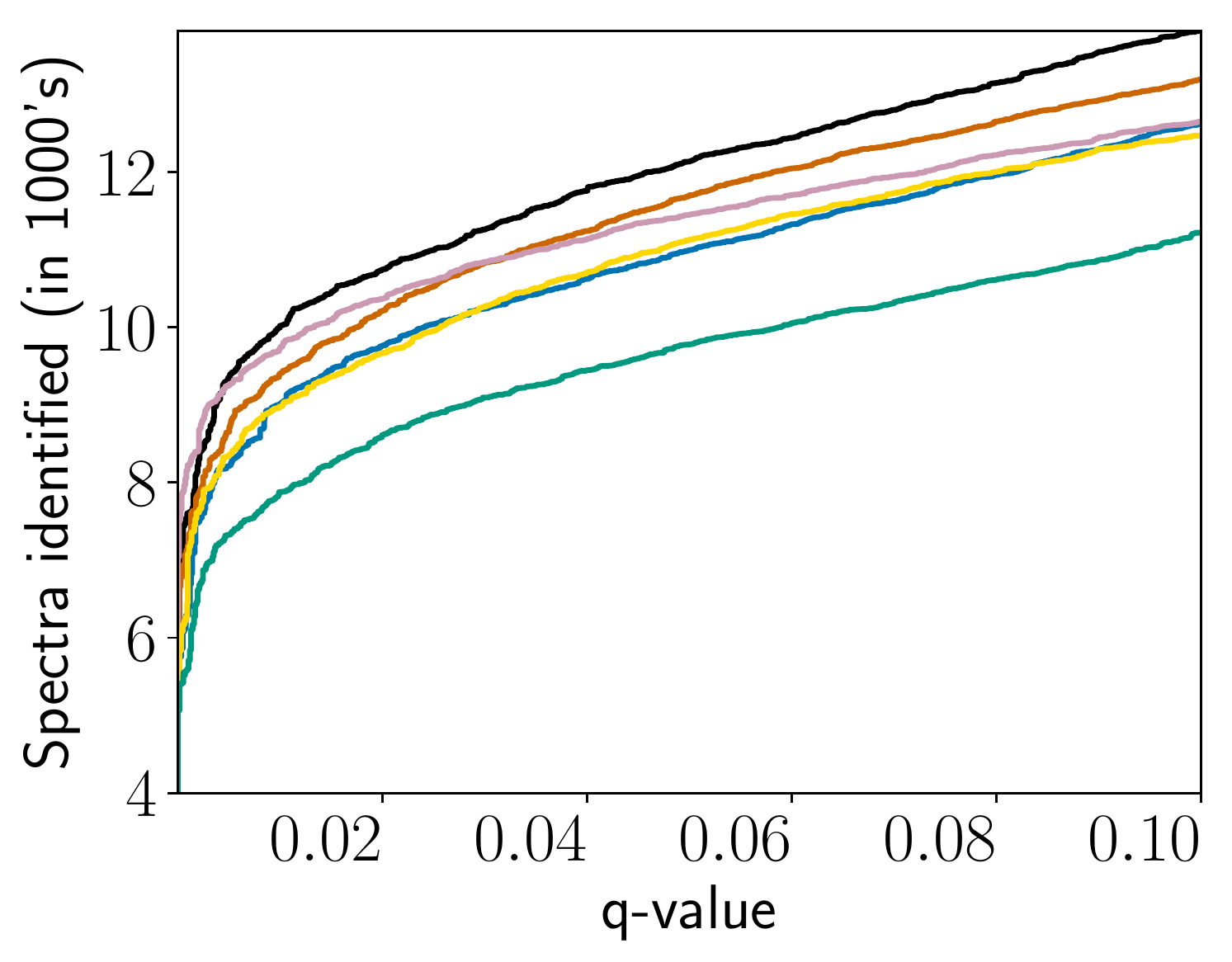}}
  \caption{{\small Database search accuracy plots measured by
    $q$-value versus number of spectra identified for worm
    (\emph{C. elegans}) and yeast (\emph{Saccharomyces cerevisiae})
    datasets.  All methods are run with as equivalent settings as
    possible.  The Didea charge-varying model was used to score PSMs,
    with ``Didea'' denoting the model trained per charge state using
    the concave framework described in Section~\ref{section:veLse} and
    ``Didea-0'' denoting the model from~\cite{singh2012-didea-uai}
    trained using a grid search.  DRIP, another DBN-based scoring
    function, was run using the generatively learned parameters
    described in~\cite{halloran2016dynamic}.}
}
  \label{fig:absRanking}
\end{figure*}
The benchmark datasets and search settings used to recently evaluate the DRIP
Fisher kernel in~\cite{halloran2017gradients} are adapted in this
work.  The charge-varying Didea model
(which integrates over multiple charge states, further described
in Appendix~\ref{section:chargeIntegration}) with concave emissions (described in
Section~\ref{section:veLse}) was used to
score and rank database peptides.  Concave Didea parameters were
learned using the high-quality
PSMs used to generatively train the DRIP model
in~\cite{halloran2016dynamic} and gradient ascent.  Didea's newly trained
database-search scoring function is benchmarked against the Didea
model from~\cite{singh2012-didea-uai} trained using a costly grid search
for a single parameter (denoted as ``Didea-0'') and four other
state-of-the-art scoring algorithms: DRIP, MS-GF+, XCorr $p$-values,
and XCorr.  

DRIP searches were conducted using the DRIP Toolkit and the
generatively trained parameters described
in~\cite{halloran2014uai-drip, halloran2016dynamic}.  MS-GF+, one of
the most accurate search algorithms in wide-spread use, was run using
version 9980, with PSMs ranked by E-value.  XCorr
and XCorr $p$-value scores were collected using Crux v2.1.17060.  All
database searches were run using a $\pm 3.0 \thomson$ mass tolerance,
XCorr flanking peaks not allowed in Crux searches, and all search
algorithm settings otherwise left to their defaults.  Peptides were
derived from the protein databases using trypsin cleavage rules
without suppression of proline and a single fixed carbamidomethyl
modification was included.  

The resulting database-search accuracy plots are
displayed in Figure~\ref{fig:absRanking}.  The trained Didea model
outperforms all competitors across all presented datasets;
compared to highly accurate scoring algorithms DRIP and MS-GF+, the
trained Didea scoring function identifies $16\%$ more spectra than
DRIP and $17.4\%$ more spectra than MS-GF+, at a strict FDR of $1\%$
averaged over the presented datasets.  This high-level performance is
attributable to the expanded and efficient parameter learning
framework, which greatly improves upon the limited parameter learning
capabilities of the original Didea model, identifying $9.8\%$ more
spectra than Didea-0 at a strict FDR of $1\%$
averaged over the presented datasets.

\subsection{Conditional Fisher kernel for improved discriminative
  analysis}\label{section:conditionalFisher}
Facilitated by Didea's effective parameter learning framework, we look to
leverage gradient-based PSM information to aid in discriminative
postprocessing analysis.  We utilize the same set of features as
the DRIP Fisher kernel~\cite{halloran2017gradients}.  However, in order
to measure the relative utility of the gradients under study, we
replace the DRIP log-likelihood gradients with Didea gradient
information.
These features are used to train an SVM classifier,
Percolator~\cite{kall:semi-supervised}, which recalibrates PSM scores
based on the learned decision boundary between input targets and
decoys.  Didea's resulting conditional Fisher kernel is
benchmarked against the DRIP Fisher kernel and the previously
benchmarked scoring algorithms using their respective standard
Percolator features sets.  

DRIP Kernel features were computed using the
customized version of the DRIP Toolkit
from~\cite{halloran2017gradients}.  MS-GF+ Percolator features were
collected using \texttt{msgf2pin} and XCorr/XCorr $p$-value features
collected using Crux.  For the resulting postprocessing results, the trained Didea
scoring function outperforms all competitors, identifying $12.3\%$ more spectra than DRIP and
$13.4\%$ more spectra than MS-GF+ at a strict FDR of
$1\%$ and averaged over the presented datasets.  The full panel of
results is displayed in Appendix~\ref{section:conditionalFisher}.
Compared to DRIP's log-likelihood gradient features, the conditional
log-likelihood gradients of Didea contain much richer PSM information, thus
allowing Percolator to better distinguish target from decoy PSMs for
much greater recalibration performance.

\subsection{Optimized exact sum-product inference for improved
  Didea runtime}\label{section:fasterResults}
Implementing the speedups to exact Didea sum-product inference
described in Section~\ref{section:faster}, we benchmark the optimized
search algorithm using $1,000$ randomly sampled spectra (with charges
varying from 1+ to 3+) from the Worm-1 dataset and averaged
database-search times (reported in wall clock time) over 10 runs.
The resulting runtimes are listed in Table~\ref{table:speedups}.  DRIP was
run using the DRIP Toolkit and XCorr $p$-values were collected using
Crux v2.1.17060.  All benchmarked search algorithms were run on the
same machine with an Intel Xeon E5-2620 and 64GB RAM.  The described
optimizations result in a $64.2\%$ runtime improvement, and brings
search time closer to less accurate, but faster, search algorithms.
\begin{table}[h]
\begin{center}
{\small
\tabcolsep=0.11cm
\begin{tabular}{|c|c|c|c|c|}
\hline
& Didea-0 & Didea Opt. & XCorr $p$-values & DRIP\\\hline
runtime & 19.1175 & 6.8535 & 2.2955 & 143.4712\\\hline
\end{tabular}
}
\caption{Database search runtimes per spectrum, in seconds, searching
  1,000 worm spectra randomly sampled from the Worm-1 dataset.
  ``Didea-0'' is the implementation of Didea used
  in~\cite{singh2012-didea-uai} and ``Didea Opt'' is the
  speed-optimized implementation described herein.  All reported
  search algorithm runtimes were averaged over 10 runs.}
\label{table:speedups}
\end{center}
\end{table}

\section{Conclusions and future work}\label{section:conclusions}
In this work, we've derived a widely applicable class of Bayesian
network emission distributions, CVEs, which naturally generalize the
convex log-sum-exp function and carry important theoretical properties
for parameter learning.  Using CVEs, we've substantially improved the
parameter learning capabilities of the DBN scoring algorithm, Didea,
by rendering its conditional log-likelihood concave with respect to a
large set of learnable parameters.  Unlike previous DBN parameter
learning solutions, which only guarantee convergence to a local
optimum, the new learning framework thus guarantees global
convergence.  Didea's newly trained database-search scoring function 
significantly outperforms all state-of-the-art scoring algorithms
on the presented datasets.  With efficient parameter learning in hand,
we derived the gradients of Didea's conditional
log-likelihood and used this gradient information in the feature space
of a kernel-based discriminative postprocessor.  The resulting
conditional Fisher kernel once again outperforms the
state-of-the-art on all presented datasets, including a highly
accurate, recently proposed Fisher kernel.  Furthermore, we
successively optimized Didea's message passing
schedule, leading to DBN analysis times two orders of magnitude faster than other leading
DBN tools for MS/MS analysis.  Thus, the presented
results improve upon all aspects of state-of-the-art DBN analysis for
MS/MS.  Finally, using the new learning framework, we've proven that
Didea is proportionally lower bounds the widely used XCorr scoring
function.

There are a number of exciting avenues for future work.  Considering
the large amount of PSM information held in the gradient space of
Didea's conditional log-likelihood, we plan on pursuing kernel-based
approaches to peptide identification using the Hessian of the scoring
function.  This is especially exciting given the high degree of recalibration
accuracy provided by Percolator, a kernel-based post-processor.  Using a variational
approach, we also plan on
investigating parameter learning options for XCorr given the Didea
lower bound and the concavity of Didea's parameterized scoring function.
Finally, in perhaps the most ambitious plan for future work, we plan
to further build upon Didea's parameter learning framework by learning
distance matrices between the theoretical and observed spectra.  Such
matrices naturally generalize the class of CVEs derived herein.

\noindent {\bf Acknowledgments}: This work was supported by the
National Center for Advancing Translational Sciences (NCATS), National
Institutes of Health, through grant UL1 TR001860.
\bibliographystyle{plain}
\setcitestyle{numbers, open={[}, close={]}}
\bibliography{dideaLearning_arxiv}

\appendix
\section{Proofs for Convex Virtual Emissions}\label{appendix:cveProof}
Consider the log-posterior
probability
{\small
\begin{equation}
\log p(h | E) = \log \frac{p(h,
  E)}{p(E)} = \log \frac{p(h,
  E)}{\sum_{\bar{h} \in H} p(\bar{h},E)} = \log p(h, E) - \log \sum_{h \in H}
  p(h)p(E | h),\label{eq:posteriorProb}
\end{equation}
}
as is computed in Didea.  Assume $p(h)$ and $p(E
| h)$ are non-negative
probability distributions and that the
emission density $p(E | h)$ is parameterized by $\theta$ which we'd like
to learn.  Applying virtual evidence for such models, $p(E | h)$
need not be normalized for posterior inference, Viterbi inference, and
comparative inference between sets of observations (discussed at
length in~\cite{halloranThesis2016}).

Assume that there is a parameter $\theta_h$ to
be learned for every hypothesis of latent variables, though if we have
fewer parameters, parameter estimation becomes strictly
easier.  We make this parameterization explicit by denoting the
emission distributions of interest as $p_{\theta_h}(E | h)$.  We first
look for functions $f_h(\theta_h) = p_{\theta_h}(E | h)$
which render the log-likelihood convex,
\begin{align}
\log p(E) =& \log \sum_{h \in H} 
p(h)p_{\theta_h}(E|h) = \log \sum_{h \in H} c_h
f_h(\theta_h),\label{equation:negLogPosterior}
\end{align}
where $c_h = p(h)$ are nonnegative constants with regards to the
parameters of interest.  Assume that $f_h(\cdot)$ is smooth on $\cR$
for all $h \in
H$.
\begin{theorem}\label{theorem:generalizedLse}
The convex functions of the form $\log \sum_{h \in H} c_h f_h(\theta_h)$, such
that $(f_h'(\theta_h))^2  - f_h''(\theta_h)f_h(\theta_h) = 0$, are $\log
\sum_{h \in H} c_h f_h (\theta_h) = \log \sum_{h \in H} c_h \alpha_h
e^{\beta_h\theta_h}$, where $\alpha_h$ and $\beta_h$ are constants
uniquely determined by initial conditions.
\end{theorem}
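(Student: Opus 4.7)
The plan is to solve the nonlinear ODE $(f_h'(\theta_h))^2 - f_h''(\theta_h)f_h(\theta_h) = 0$ explicitly, and then verify that each member of the resulting two-parameter family makes the log-sum expression convex. Note that the ODE is separable across $h$, so solving it for each $h$ independently suffices.

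First I would handle the ODE. Assuming $f_h(\theta_h) > 0$ on an open interval, introduce $g_h(\theta_h) = \log f_h(\theta_h)$, so that $g_h' = f_h'/f_h$ and $g_h'' = f_h''/f_h - (f_h'/f_h)^2$. Rewriting the ODE as $f_h''/f_h = (f_h'/f_h)^2$ shows $g_h'' = 0$, so $g_h$ is affine: $g_h(\theta_h) = \beta_h\theta_h + \log \alpha_h$ for constants $\alpha_h > 0$ and $\beta_h \in \reals$. Exponentiating yields $f_h(\theta_h) = \alpha_h e^{\beta_h \theta_h}$. Because the original equation is second order and its general solution depends on two constants, this exhausts the family, and the two constants are in bijection with the two initial conditions $(f_h(\theta_0), f_h'(\theta_0))$.

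Next I would verify convexity. Substituting the solved form, $\log \sum_{h \in H} c_h f_h(\theta_h) = \log \sum_{h \in H} e^{\beta_h \theta_h + \log(c_h \alpha_h)}$, which is log-sum-exp evaluated at the affine map $\theta \mapsto (\beta_h \theta_h + \log(c_h\alpha_h))_{h \in H}$. Convexity then follows immediately from the well-known convexity of log-sum-exp together with preservation of convexity under affine precomposition. As a sanity check that matches the paper's hint, one can instead compute the Hessian directly: with $S = \sum_h c_h f_h$, its diagonal entries are $c_h f_h''/S - (c_h f_h'/S)^2$ and its off-diagonals are $-(c_h f_h'/S)(c_{h'} f_{h'}'/S)$, and positive semidefiniteness of this matrix is precisely the weighted Cauchy--Schwarz inequality $\bigl(\sum_h x_h c_h f_h'\bigr)^2 \leq \bigl(\sum_h x_h^2 c_h f_h''\bigr)\bigl(\sum_h c_h f_h\bigr)$ applied under the pointwise saturation $(f_h')^2 = f_h'' f_h$.

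The main obstacle I anticipate is the regularity step underlying the substitution $g_h = \log f_h$, namely showing that a smooth nonnegative solution of the ODE is either identically zero or strictly positive. At a zero $\theta^*$ of $f_h$, the ODE forces $(f_h'(\theta^*))^2 = 0$, hence $f_h'(\theta^*) = 0$; differentiating the ODE and iterating shows that all derivatives of $f_h$ vanish at $\theta^*$, so smoothness plus the analyticity of the solutions to this particular ODE implies $f_h \equiv 0$ on the connected component containing $\theta^*$. The identically-zero case corresponds to dropping that hypothesis from $H$ (equivalently, $\alpha_h = 0$), and every other component admits the exponential form derived above. Once this technicality is handled, the two steps combine to give exactly the claimed characterization.
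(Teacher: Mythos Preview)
Your proposal is correct and reaches the same conclusion as the paper, but the route differs in both main steps. The paper first computes the Hessian of $\log\sum_h c_h f_h(\theta_h)$, rewrites the positive-semidefiniteness condition as $(\sum_h c_h f_h)(\sum_h c_h f_h'' x_h^2)\geq(\sum_h c_h f_h' x_h)^2$, and observes that Cauchy--Schwarz guarantees this once $(f_h')^2=f_h''f_h$; only then does it solve the ODE, and it does so by the reduction-of-order substitution $w(y)=y'$, obtaining $\dot w=w/y$, then $y'=d_1 y$, and finally $y=d_3 e^{d_1 t}$. You instead solve the ODE first via the logarithmic substitution $g_h=\log f_h$, which collapses the equation to $g_h''=0$ in one line, and then verify convexity by recognizing the result as log-sum-exp precomposed with an affine map. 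Your ODE argument is shorter and more transparent; the paper's Hessian derivation, on the other hand, makes explicit why the differential constraint is the natural one to impose (it is exactly the pointwise saturation that turns the PSD condition into Cauchy--Schwarz), which motivates the hypothesis rather than taking it as given. You also flag the positivity issue needed for the substitution $g_h=\log f_h$; the paper does not address this and simply divides by $y$ in its reduction, so your treatment is if anything more careful on that point.
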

\begin{proof}
In order to ensure convexity of
Equation~\ref{equation:negLogPosterior}, it is necessary and
sufficient that $\nabla_{\theta}^2 \log \sum_{h \in H} c_h
f_h(\theta_h)\succeq 0$.  We
thus have the following for the gradient
\begin{align*}
\nabla_{\theta} \log \sum_{h \in H} c_h
f_h(\theta_h) = & \frac{1}{\sum_{h \in H} c_hf_h(\theta_h)}
\begin{bmatrix} 
  c_1 f_1'(\theta_1)\\ 
  \vdots \\
  c_{\hCard} f_{\hCard}'(\theta_{\hCard})
\end{bmatrix}.
\end{align*}
Letting $Z = \sum_{h} c_h f_h(\theta_h)$, we have
\begin{align*}
\frac{\delta \log p_{\theta} (h | E)}{\delta \theta_i \delta
  \theta_j}
= & \begin{cases}
\frac{c_if_i''(\theta_i)Z - (c_if_i'(\theta_i))^2}{Z^2} & \mbox{ if }
i = j\\
\frac{- c_if_i'(\theta_i) c_j f_j' (\theta_j)}{Z^2} & \mbox{ if } i
\neq j
\end{cases}
\end{align*}
Letting $a = \begin{bmatrix}c_1 f''_1(\theta_1) & \dots & c_{\hCard}
  f''_{\hCard}(\theta_{\hCard})\end{bmatrix}^T$ and $b
= \begin{bmatrix}c_1 f'_1(\theta_1) & \dots & c_{\hCard}
  f'_{\hCard}(\theta_{\hCard})\end{bmatrix}^T$, we may thus write the
Hessian as
\begin{align}\label{equation:compactHessian}
\nabla_{\theta}^2 \log \sum_{h \in H} c_h
f_h(\theta_h) = & \frac{ \mbox{diag} (a) }{Z}
- \frac{1}{Z^2}bb^T.  
\end{align}

Equation~\ref{equation:compactHessian} is positive semi-definite if and only
if, for all $x \in \cR ^n$,
\begin{align}
x^T \nabla_{\theta}^2 \log \sum_{h \in H} c_h
f_h(\theta_h) x \geq & 0 \nonumber\\
Z^2 x^T \nabla_{\theta}^2 \log \sum_{h \in H} c_h
f_h(\theta_h) x \geq & 0\nonumber\\
x^T ( Z \mbox{diag}(a) - bb^T) x \geq & 0 \nonumber\\
x^T Z \mbox{diag}(a)x \geq &  x^T bb^T x\nonumber\\
( \sum_i c_i f_i(\theta_i) )( \sum_i c_i f_i''(\theta_i)x_i^2 )  \geq & ( \sum_i c_i
f_i' (\theta_i) x_i )^2.\label{equation:hessianBound}
\end{align}


Letting $l, u, v$ be vectors with components 
$l_i = x_i \sqrt{c_if''_i(\theta_i)}, u_i = x_i
\frac{c_if'_i(\theta_i)}{\sqrt{c_if_i(\theta_i)}}, v_i =
\sqrt{c_if_i(\theta_i)}$, we require
\begin{align}\label{equation:generalizedLse}
(l^Tl)(v^Tv) \geq& (u^Tv)^2 .
\end{align}
Note that, by the non-negativity of $c_h$ and $f_h(\theta_h)$, $v_i$
is real and the quantify $l^Tl$ is always real.  When $l = u$, the
bound in Equation~\ref{equation:generalizedLse} is guaranteed to hold
by the Cauchy-Schwarz inequality.  Thus, the
log-probability of evidence is convex when
\begin{align}
l_i =& u_i \nonumber\\
\Rightarrow x_i \sqrt{c_i f_i''(\theta_i)} =& x_i \frac{c_i
  f_i'(\theta_i)}{\sqrt{c_i f_i(\theta_i)}} \nonumber\\
\Rightarrow f_i''(\theta_i)f_i(\theta_i) =&
(f_i'(\theta_i))^2.\label{equation:lseOde}
\end{align}
Equation~\ref{equation:lseOde} is an autonomous, second-order,
nonlinear ordinary differential equation (ODE), the solution of which
leads us to the following generalization of the commonly encountered
LSE convex function.

To simplify notation, let $t= \theta_i$ and $y(t) = f_i(t)$, where
we drop the independent variable when it is understood.  Thus, we are
looking for $y$ such that 
\begin{align}
y'' =& \frac{(y')^2}{y}.\label{equation:rewrittenOde}
\end{align}
Let $v(t) = y'(t)$, $w(y)
= v(t(y))$, and $\dot{w} = \frac{dw}{dy}$.  We thus have
\begin{align*}
v' =& \frac{v^2}{y},\\
\dot{w}(y) =& \frac{dw}{dy}(y) =
\frac{dv}{dt}\frac{dt}{dy} \bigg|_{t(y)} = \frac{v'}{y'} \bigg|_{t(y)}
= \frac{v'}{v} \bigg|_{t(y)}.
\end{align*}
Using Equation~\ref{equation:rewrittenOde}, we have
\begin{align*}
\dot{w}(y) =& \frac{v'(t(y))}{w(y)}= \frac{v^2(t(y))}{w(y)y} =
\frac{w^2(y)}{w(y)y} = \frac{w(y)}{y}.
\end{align*}
Solving this ODE using seperation of variables, we have
\begin{align*}
\ln w =& \ln y + d_0\\
\Rightarrow w =& \exp (\ln{y} + d_0)= e^{d_0}y  = d_1y,
\end{align*}
where $d_0$ is a constant of integration.  To solve for $y$, we have
\begin{align*}
w(y(t)) = v(t) = y' = d_1 y.
\end{align*}
As before, we solve $y' = d_1y$ using seperation of variables, giving
us 
\begin{align*}
\ln y =& d_1 t + d_2\\
\Rightarrow y(t) =& d_3e^{d_1t},
\end{align*}
where $d_1$ and $d_3$ are constants uniquely determined by initial
conditions.  Returning to our earlier notation, the solution to
Equation~\ref{equation:lseOde} is thus $f_i(\theta_i) = d_3 e^{d_1
\theta_i}$.  
Letting $\beta_i = d_1$ and
$\alpha_i = d_3$ completes the proof.
\end{proof}
\begin{corollary}
For convex $\log p(E) = \log \sum_{h \in H} p(h)p_{\theta_h}(E | h)$ such that
$(p_{\theta_h}'(E|h))^2 - p_{\theta_h}''(E|h)p_{\theta_h}(E|h) = 0$, the
log-posterior $\log p_{\theta}(h | E)$ is concave in $\theta$.
\end{corollary}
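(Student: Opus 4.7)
The plan is to invoke Theorem~\ref{theorem:generalizedLse} to obtain the explicit closed form of the emission distributions, and then read off concavity by a simple decomposition of the log-posterior into an affine piece plus the negation of a convex piece. First I would apply Theorem~\ref{theorem:generalizedLse}: the stated ODE $(p'_{\theta_h}(E|h))^2 - p''_{\theta_h}(E|h)p_{\theta_h}(E|h) = 0$ is exactly the condition derived there, so $p_{\theta_h}(E|h) = \alpha_h e^{\beta_h \theta_h}$ for hyperparameters $\alpha_h > 0$ and $\beta_h$ fixed by initial conditions.

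Next I would write the log-posterior in its standard decomposition
\[
\log p_{\theta}(h \mid E) \;=\; \log p(h) \;+\; \log p_{\theta_h}(E \mid h) \;-\; \log \sum_{\bar h \in H} p(\bar h)\, p_{\theta_{\bar h}}(E \mid \bar h).
\]
Substituting the closed form from Theorem~\ref{theorem:generalizedLse}, the first term is a constant in $\theta$; the second term becomes $\log \alpha_h + \beta_h \theta_h$, which is affine in the single coordinate $\theta_h$ of the joint parameter vector $\theta = \{\theta_{\bar h} : \bar h \in H\}$ and therefore affine (hence concave) as a function of $\theta$ as a whole. The third term is precisely $-\log p(E)$, and by the hypothesis of the corollary (equivalently, by the conclusion of Theorem~\ref{theorem:generalizedLse}) $\log p(E)$ is convex in $\theta$, so its negation is concave.

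To finish, I would note that a sum of an affine function and a concave function is concave, which yields the claim. There is no real obstacle here beyond Theorem~\ref{theorem:generalizedLse} itself; the only point worth stating explicitly is that the $\theta_h$ appearing in the numerator term is one coordinate of the same joint vector $\theta$ over which convexity of the denominator was established, so the affine-plus-concave argument applies to the full vector $\theta$ and not merely to the scalar $\theta_h$.
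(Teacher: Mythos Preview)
Your proposal is correct and follows essentially the same approach as the paper: invoke Theorem~\ref{theorem:generalizedLse} to obtain $p_{\theta_h}(E\mid h)=\alpha_h e^{\beta_h\theta_h}$, decompose $\log p_\theta(h\mid E)$ into a constant plus the affine term $\beta_h\theta_h$ minus the convex log-likelihood, and conclude concavity. Your added remark that $\theta_h$ is one coordinate of the full vector $\theta$ is a helpful clarification not made explicit in the paper's proof, but the argument is otherwise identical.
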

\begin{proof}
From Theorem~\ref{theorem:generalizedLse}, $p_{\theta_h}(E | h) = \alpha_h
e^{\beta_h\theta_h}$ and we have
\begin{align*}
\log p_{\theta}(h | E) & = \log p_{\theta_h}(h, E) - \log \sum_{h \in H}
  p(h)p_{\theta_h}(E | h)\\
& = \log p(h) p_{\theta_h}(E | h) - \log \sum_{h \in H}
  p(h)p_{\theta_h}(E | h)\\
& = \log p(h) \alpha_he^{\beta_h\theta_h} - \log \sum_{h \in H}
  p(h)p_{\theta_h}(E | h)\\
& = \log p(h) \alpha_h + \beta_h\theta_h - \log \sum_{h \in H}
  p(h)p_{\theta_h}(E | h).
\end{align*}
With respect to $\theta$, $\beta_h\theta_h$ is affine, $ - \log \sum_{h \in H}
  p(h)p_{\theta_h}(E | h)$ is concave, and the remaining term is
  constant.
\end{proof}

\section{Analysis of CVEs in Didea}\label{section:veLse}
In order to analyze Didea's scoring function,
define boolean vectors length $\maxth$ $\bion, \yion$  such that, for
the set of b-ions $\beta_{\pep}=\{\cup_{i=1}^{l-1} \{b(m(x_{1:i}),1)\}
\}$ and y-ions $\upsilon_{\pep}=\{\cup_{i=0}^{l-1}\{y(m(x_{i+1:l}),1)
\} \}$ of $x$, and $0 \leq j \leq \maxth$ we have 
\begin{align*}
\bion(j) =& \indicator_{\{ j \in \beta \}}, \;\;\;
\yion(j) =
\indicator_{\{ j \in \upsilon \}}.
\end{align*}
We note that computing Didea scores as detailed in the sequel would be
much more slower than computing Didea scores using sum-product
inference.  However, the compact description of Didea's scoring
function allows much easier analysis in
Appendices~\ref{section:dideaGradients} and~\ref{section:xcorrbound}.

Recall the CVE used in the main paper,
$f_{\theta_\tau}(\binspec (i)) = e^{\theta_{\tau} \binspec (i)}$.
Under this new emission distribution, Didea's scoring
function may thus be compactly written as
{\small
\begin{align}
\dideaScore_{\lambda}(\binspec, \pep) 
&=\log{\distp ( \pep, \procObs | \dideaShift_0=0)}
-\log{ \sum_{\tau}\distp(\pep,
  \procObs_{\dideaShift}|\dideaShift_0=\tau) } \nonumber\\
&= \sum_{t=1}^{l} ( \log \cve{0}(b_t) + 
\log \cve{0}(y_{n-t}) ) - \log{
  \sum_{\tau} \exp{\sum_{t=1}^{l} ( \log \cve{\tau}(b_t) + 
\log \cve{\tau}(y_{n-t}) ) }} \nonumber\\
&= \theta_0\bion^T \binspec + \theta_0\yion^T \binspec -\log{ 
  \sum_{\tau} \exp ( {\theta_{\tau}\bion^T \binspec_{\tau} +
  \theta_{\tau}\yion^T \binspec_{\tau}  ) }}= \theta_0(\bion + \yion)^T \binspec -\log{ 
  \sum_{\tau} \exp {[\theta_{\tau}(\bion + \yion)^T
  \binspec_{\tau}  ]}}.\label{equation:compactDidea}
\end{align}
}
\subsection{Gradients of CVEs in Didea}\label{section:dideaGradients}
Letting $h_{\tau}(x,s) = (\bion +
\yion)^Ts_{\tau}$, the gradient of this new conditional log-likelihood
has elements
{\small
\begin{align}
\frac{\delta}{\delta \theta^{\tau}}
\psi_{\theta}(s,x) \bigg\rvert_{\tau=0}&= h_0(x,s)-\frac{1}{\sum_{0} e^{\theta_{0}
    h_0(x,s)}} \sum_{0} h_0(x,s) e^{\theta_{0}
  h_0(x,s)} \label{eq:dideaDerivativeA}\\
\frac{\delta}{\delta \theta_{\tau}} \psi_{\theta}(s,x)
\bigg\rvert_{\tau\neq0}&= - 
\frac{1}{\sum_{\tau} e^{\theta_{\tau}
    h_{\tau}(x,s)}} \sum_{\tau} h_{\tau}(x,s) e^{\theta_{\tau}
  h_{\tau}(x,s)}.\label{eq:dideaDerivativeB}
\end{align}
}
Given $N$ i.i.d.\ training PSMs $\{ (s^1,x^1), (s^2, x^2),
\dots, (s^N, x^N) \}$, we need only run sum-product inference once to
cache the values $\{\cup_{i=1}^N
\{h_{-\dideaMaxShift}(s^i,x^i), \dots , h_{0}(s^i,x^i), \dots,
h_{\dideaMaxShift}(s^i,x^i) \}\}$ for extremely fast gradient based
learning.
\section{PROOF OF XCORR UPPER BOUND FOR DIDEA'S SCORING FUNCTION}
\subsection{Proof of Didea lower bound for the XCorr scoring function}\label{section:xcorrbound}
\begin{theorem}\label{theorem:xcorrbound}
Assume the PSM scoring function $\dideaScore(s,x)$ is that of Didea
under the emission function $f_{\theta_{\tau}}(s(i))$ with uniform
weights $\theta_{i} = \theta_{j}$, for $i,j \in [-L, L]$.  Then
$\dideaScore(s,x) \leq \bigo(\mbox{XCorr}(\obsSpec, \pep))$.
\end{theorem}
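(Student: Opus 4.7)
The plan is to start from the compact reformulation of Didea's score given in Appendix equation equation:compactDidea, apply the uniform weight assumption, and then bound the log-sum-exp background term from below using Jensen's inequality so that the XCorr cross-correlation background appears on the right-hand side.

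First, I would set $u = \bion + \yion$, which is essentially the sparse theoretical spectrum indicator used in XCorr, and specialize the compact Didea form to uniform weights $\theta_\tau = \theta$ for all $\tau \in [-L,L]$. Under this assumption the scoring function collapses to
\begin{equation*}
\dideaScore(s,x) = \theta\, u^T s \;-\; \log \sum_{\tau=-L}^{L} \exp\!\bigl(\theta\, u^T s_\tau\bigr).
\end{equation*}
So the whole theorem reduces to producing an upper bound on the log-sum-exp that matches the uniform average appearing in XCorr.

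Second, I would invoke Jensen's inequality for the concave function $\log$: rewriting the log-sum-exp as
\begin{equation*}
\log \sum_{\tau=-L}^{L} \exp(a_\tau) = \log(2L+1) + \log\!\left[\frac{1}{2L+1}\sum_{\tau=-L}^{L} e^{a_\tau}\right] \;\geq\; \log(2L+1) + \frac{1}{2L+1}\sum_{\tau=-L}^{L} a_\tau,
\end{equation*}
with $a_\tau = \theta\, u^T s_\tau$. Substituting this lower bound into the background term and rearranging gives
\begin{equation*}
\dideaScore(s,x) \;\leq\; \theta\left[u^T s - \frac{1}{2L+1}\sum_{\tau=-L}^{L} u^T s_\tau\right] - \log(2L+1) \;=\; \theta\cdot \mbox{XCorr}(s,x) - \log(2L+1),
\end{equation*}
which is precisely $\bigo(\mbox{XCorr}(s,x))$, since $\theta$ and $\log(2L+1)$ are constants with respect to $(s,x)$.

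Honestly, I do not expect a significant obstacle here: once the compact form from Appendix equation:compactDidea is in hand and the weights are uniform, the theorem is essentially an application of Jensen's inequality to recognize that the log-sum-exp background of Didea majorizes (up to an additive constant) the uniform average background of XCorr. The only subtlety worth flagging is that the result requires $\theta \geq 0$ for the inequality to translate the lower bound on the log-sum-exp into the stated upper bound on $\dideaScore$; if $\theta<0$ the inequality would flip, but this case is not practically relevant since peak weights are trained to be non-negative so as to reward, rather than penalize, matched b-/y-ion intensities.
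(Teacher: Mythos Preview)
Your proposal is correct and essentially identical to the paper's proof: both start from the compact log-sum-exp form of Didea's score under uniform weights and apply Jensen's inequality (equivalently, $\log \mathbf{E}[p] \geq \mathbf{E}[\log p]$ for the uniform prior on $\tau$) to obtain $\psi(s,x) \leq \lambda\,\mbox{XCorr}(s,x) - \log(2L+1)$. One small correction to your closing caveat: the inequality $\psi(s,x) \leq \theta\,\mbox{XCorr}(s,x) - \log(2L+1)$ does \emph{not} require $\theta \geq 0$, since Jensen gives $\log\sum_\tau e^{a_\tau} \geq \log(2L+1) + \tfrac{1}{2L+1}\sum_\tau a_\tau$ for arbitrary real $a_\tau$, and the subsequent algebra never multiplies the inequality by $\theta$; the sign of $\theta$ only affects whether the bound is informative, not whether it holds.
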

\begin{proof}
Recall that, for theoretical spectrum $u$, XCorr is computed as
{\small
\begin{align*}
\mbox{XCorr}(\obsSpec, \pep) &= \theoVector^Ts-
\frac{1}{2L+1}\sum_{\tau=-L}^{L}\theoVector^Ts_{\tau} =
\theoVector^T(s- \frac{1}{2L + 1}\sum_{\tau=-L}^{L}s_{\tau} )=
\theoVector^Ts'.
\end{align*}}
Let $\lambda = \theta_{i}$ for $i \in [-L, L]$.  From Didea's scoring
function, we have 
{\small
\begin{align*}
\dideaScore(\obsSpec, \pep) 
&=\log{\distp ( \pep, \procObs, \dideaShift_0=0)}
-\log{\sum_{\tau  = -\dideaMaxShift}^{\dideaMaxShift}
  \distp(\dideaShift_0=\tau)
  \distp(\pep, \procObs_{\dideaShift}|\dideaShift_0=\tau)}= \log{\distp ( \pep, \procObs, \dideaShift_0=0)}
-\log{ \expect [\distp(\pep, \procObs_{\dideaShift}|\tau)] },
\end{align*}
}
so that, by Jensen's inequality,
\begin{align}
\dideaScore(\obsSpec, \pep) & \leq \log{\distp ( \pep, \procObs, 
  \dideaShift_0=0)}
-\expect [\log \distp(\pep,
\procObs_{\dideaShift}|\tau)]\label{eq:dideaJensen}.
\end{align}

The right-hand side of~\ref{eq:dideaJensen}, which we'll denote as
$g(s,x)$, is 
{\small
\begin{align*}
g(s,x) &=\log { \frac{1}{\tauCard} \distp ( \pep, \procObs | \dideaShift_0=0) }
-\expect [\log \distp(\pep,
\procObs_{\dideaShift}|\dideaShift_0=\tau)]\\
 &= -\log{\tauCard} + \lambda (\bion + \yion)^T s-\sum_{\tau = -\dideaMaxShift}^{\dideaMaxShift} \distp (\dideaShift_0=0) \log{ e^{\lambda
    (\bion+\yion)^T s_{\tau}}}\\
&= -\log{\tauCard}+ \lambda(\bion + \yion)^T s
-\frac{\lambda}{\tauCard}\sum_{\tau = -\dideaMaxShift}^{\dideaMaxShift}
(\bion+\yion)^T s_{\tau}.
\end{align*}
}
Letting $\theoVector = \bion + \yion$, we have
{\small
\begin{align*}
g(s,x) &= -\log{\tauCard}+ \lambda\theoVector^T s
-\frac{\lambda}{\tauCard}\sum_{\tau = -\dideaMaxShift }^{\dideaMaxShift} \theoVector^T
s_{\tau}= -\log{\tauCard}+ \lambda\theoVector^T ( s
-\frac{1}{\tauCard}\sum_{\tau = -\dideaMaxShift}^{\dideaMaxShift}
s_{\tau})= -\log{\tauCard}+ \lambda\mbox{XCorr}(s,x).
\end{align*}
}
\\
$\Rightarrow \dideaScore(\obsSpec, \pep) \leq g(s,x)=-\log{\tauCard}+
\lambda\mbox{XCorr}(s,x)$
\end{proof}
\section{Charge varying spectra}\label{section:chargeIntegration}
In practice, observed spectra exhibit higher charged fragment
ions.  In order to account for these new fragmentation peaks while
still keeping the scoring function well calibrated (i.e., keeping
higher charged PSMs comparable in range to lower charged PSMs), a
Didea charge varying model is introduced
in~\cite{singh2012-didea-uai}.  In this model, a global variable
switches between two separate models: the singly charged model and a
model which considers both single and
double charged fragment ions.  The latter model contains a charge
variable in every frame which is hidden and integrated over.  This
effectively averages the contribution between the differently charge
b-and y-ion pairs.  Finally, the contribution between the two separate
charged models is averaged (per frame).  The posterior of $\tau_0 = 0$
remains Didea's PSM score in this setting.    Further details of the
model's scoring function may be
found in~\cite{singh2012-didea-uai}.

\section{Conditional Fisher kernel for improved discriminative
  analysis}\label{section:conditionalFisher}
We leverage Didea's gradient-based PSM information to aid in
discriminative postprocessing analysis.  We utilize the same set of
features as the DRIP Fisher kernel~\cite{halloran2017gradients} where,
to measure the relative utility of the gradients under study, the DRIP
log-likelihood gradients are replaced with Didea gradient
information (derived in Section~\ref{section:dideaGradients}).
These features are used to train an SVM classifier,
Percolator~\cite{kall:semi-supervised}, which recalibrates PSM scores
based on the learned decision boundary between input targets and
decoys.  The resulting Didea conditional Fisher kernel is
benchmarked against the DRIP Fisher kernel and the scoring algorithms
benchmarked in the main paper (using their respective standard
Percolator features sets).  DRIP Kernel features were computed using a
customized version of the DRIP Toolkit, provided by the authors of
~\cite{halloran2017gradients}.  MS-GF+ Percolator features were
collected using \texttt{msgf2pin} and XCorr/XCorr $p$-value features
collected using Crux.  The resulting postprocessing results are
displayed in Figure~\ref{fig:percolatorAbsRanking}.
\begin{figure*}[htbp!]
  \centering
  \subfigure{\raisebox{8.0mm}{\includegraphics[trim=3.8in 1.0in 0.3in 2.3in,
    clip=true,scale=0.7]{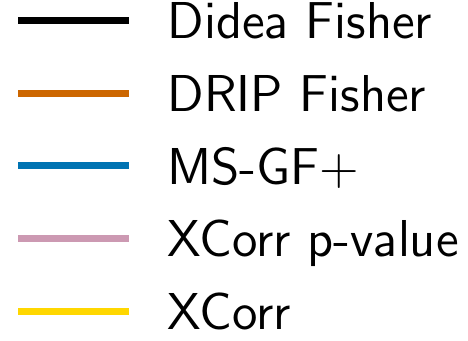}}}
  \subfigure[Worm-1]{\includegraphics[trim=0.0in 0.0in 0.0in 0.05in,
    clip=true,scale=0.30]{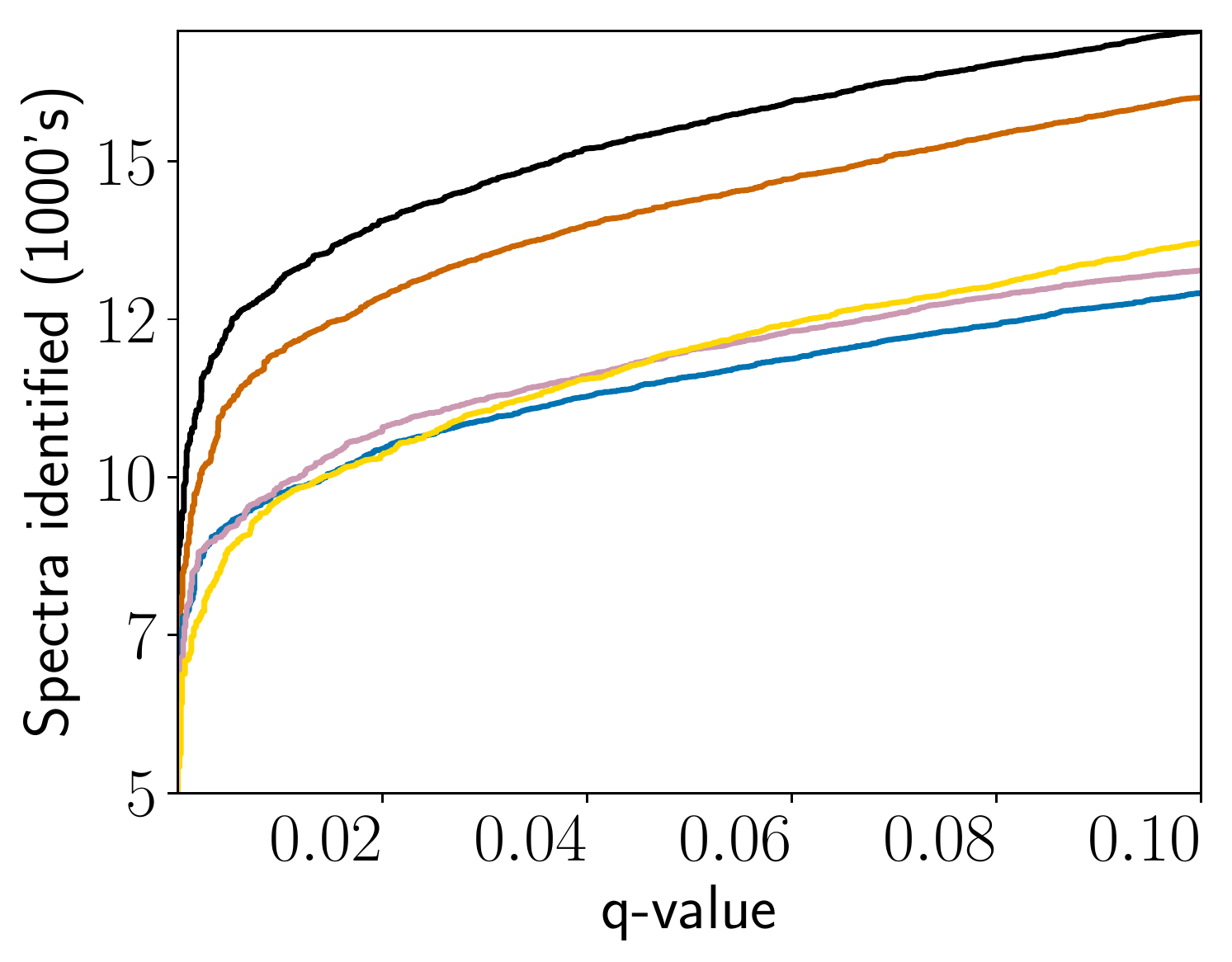}}
  \subfigure[Worm-2]{\includegraphics[trim=0.45in 0.0in 0.0in 0.05in,
    clip=true,scale=0.30]{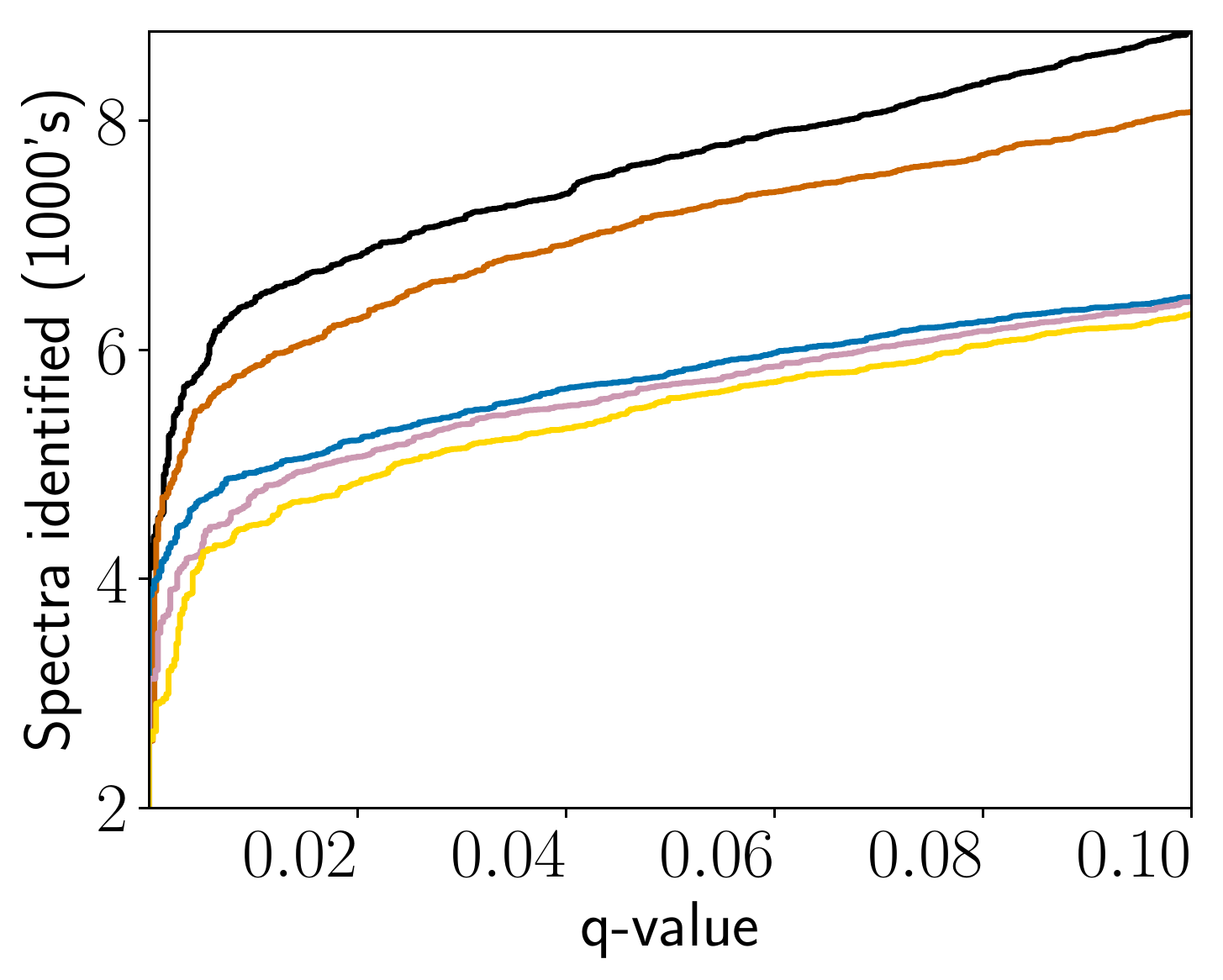}}
  \subfigure[Worm-3]{\includegraphics[trim=0.0in 0.0in 0.0in 0.05in,
    clip=true,scale=0.3]{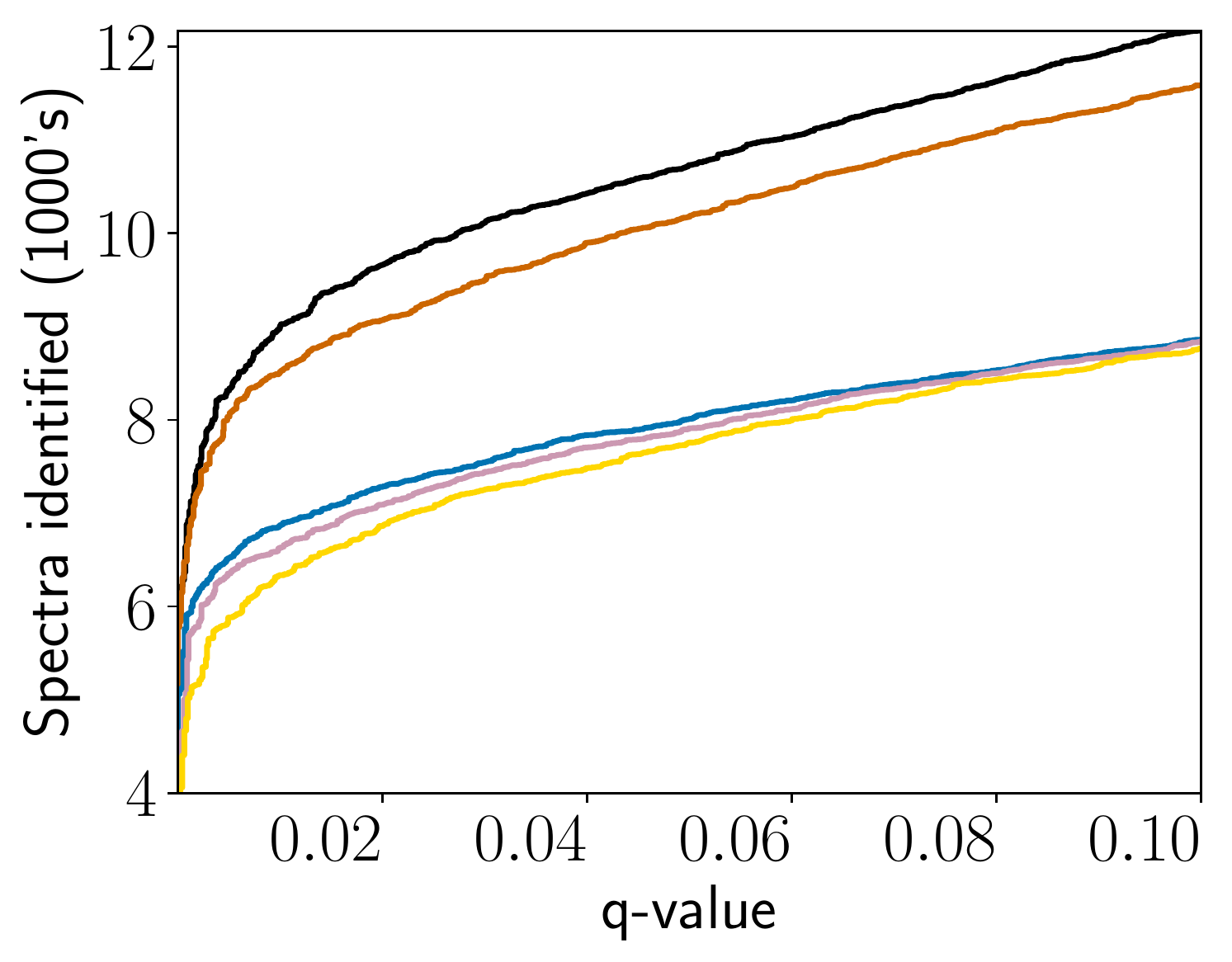}}
  \subfigure[Worm-4]{\includegraphics[trim=0.45in 0.0in 0.0in 0.05in,
    clip=true,scale=0.3]{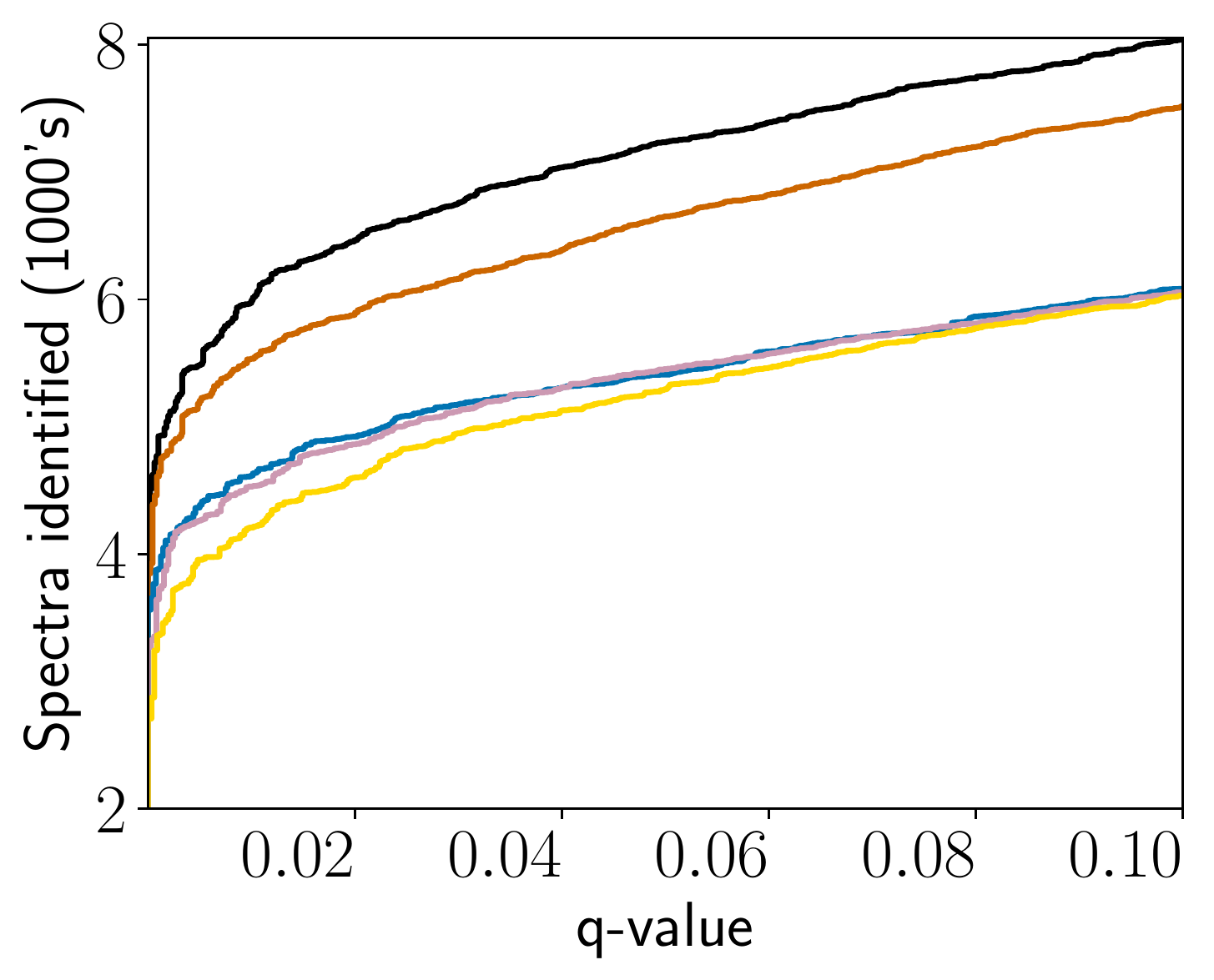}}
  \subfigure[Yeast-1]{\includegraphics[trim=0.45in 0.0in 0.0in 0.05in,
    clip=true,scale=0.3]{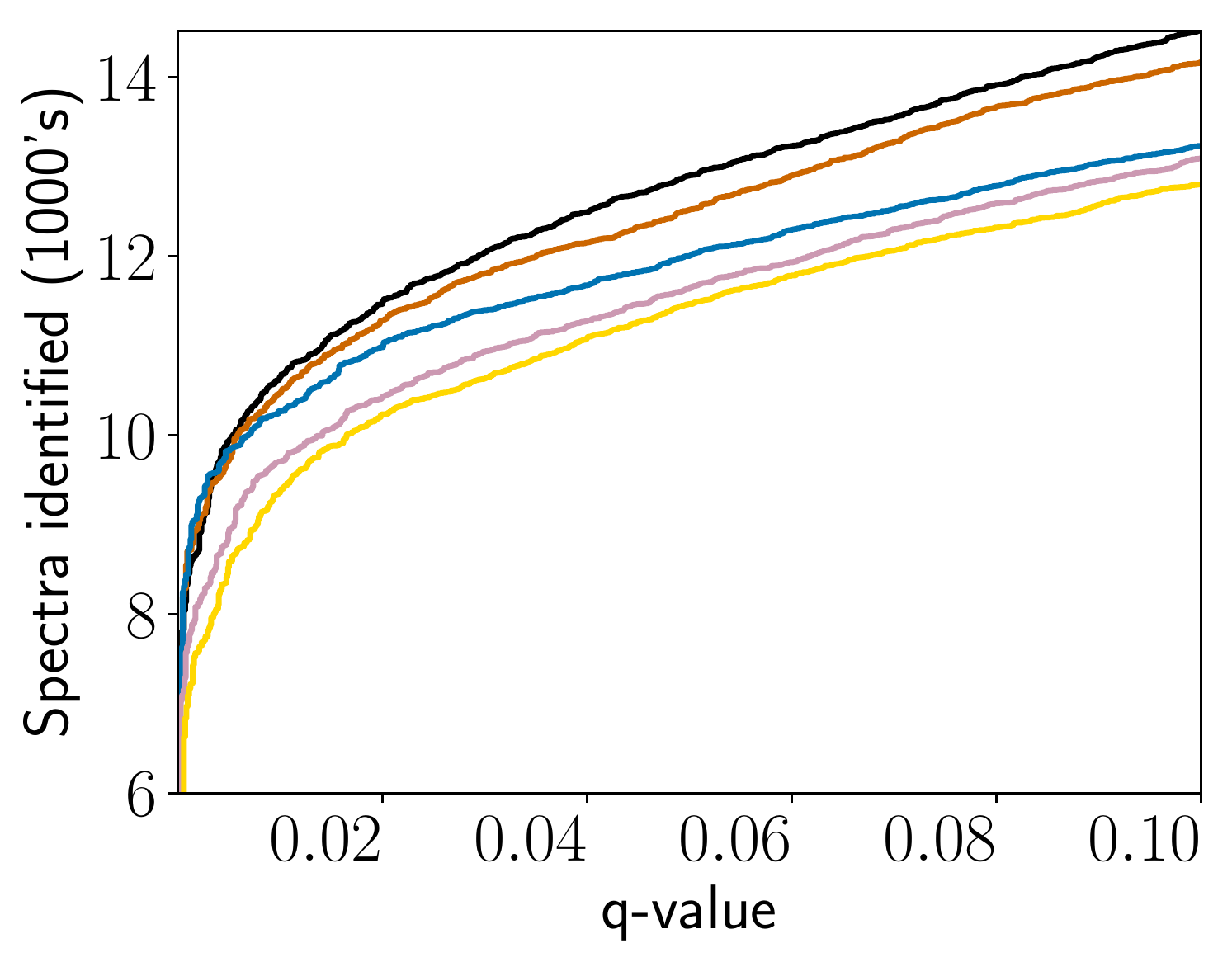}}
  \subfigure[Yeast-2]{\includegraphics[trim=0.0in 0.0in 0.0in 0.05in,
    clip=true,scale=0.3]{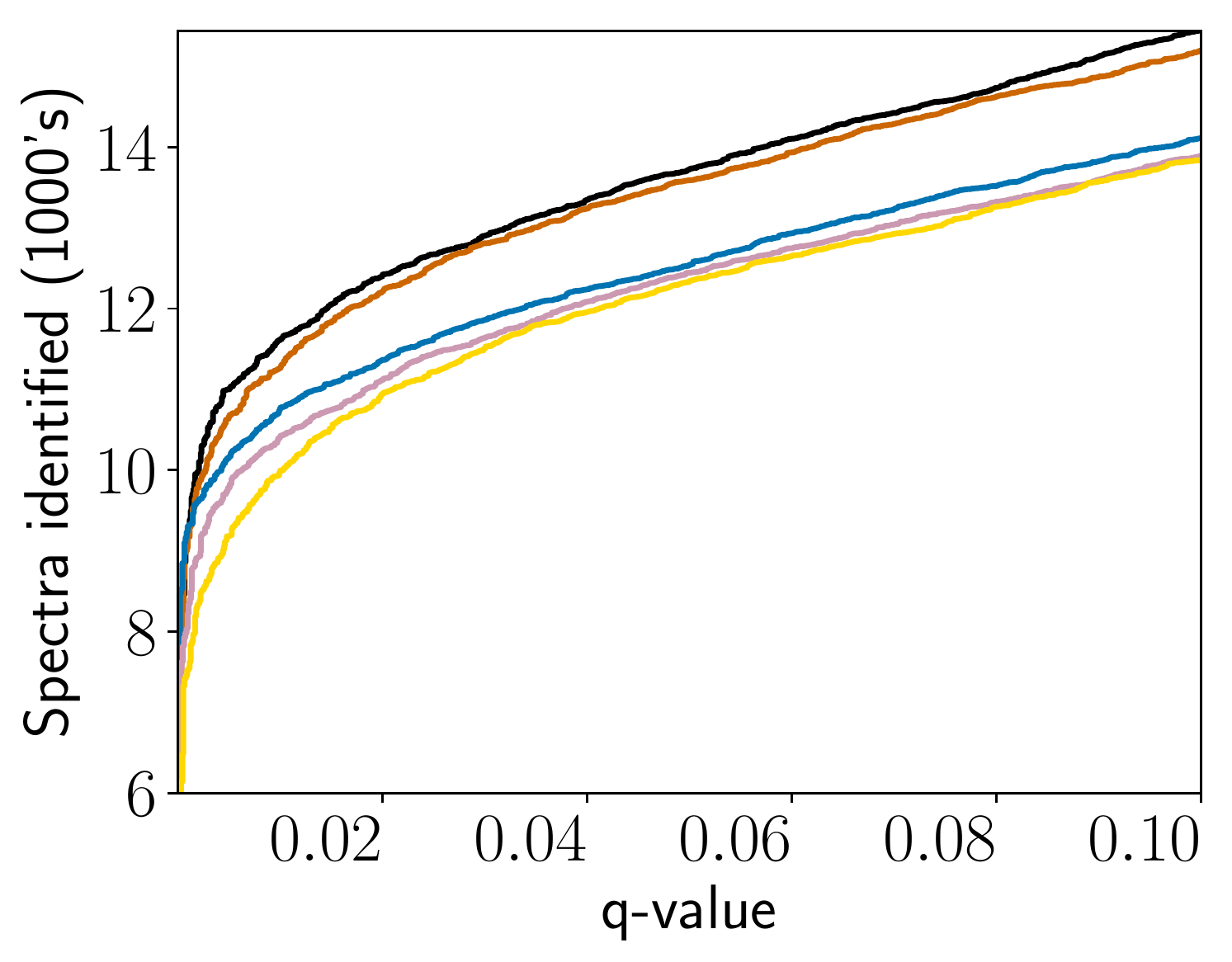}}
  \subfigure[Yeast-3]{\includegraphics[trim=0.45in 0.0in 0.0in 0.05in,
    clip=true,scale=0.3]{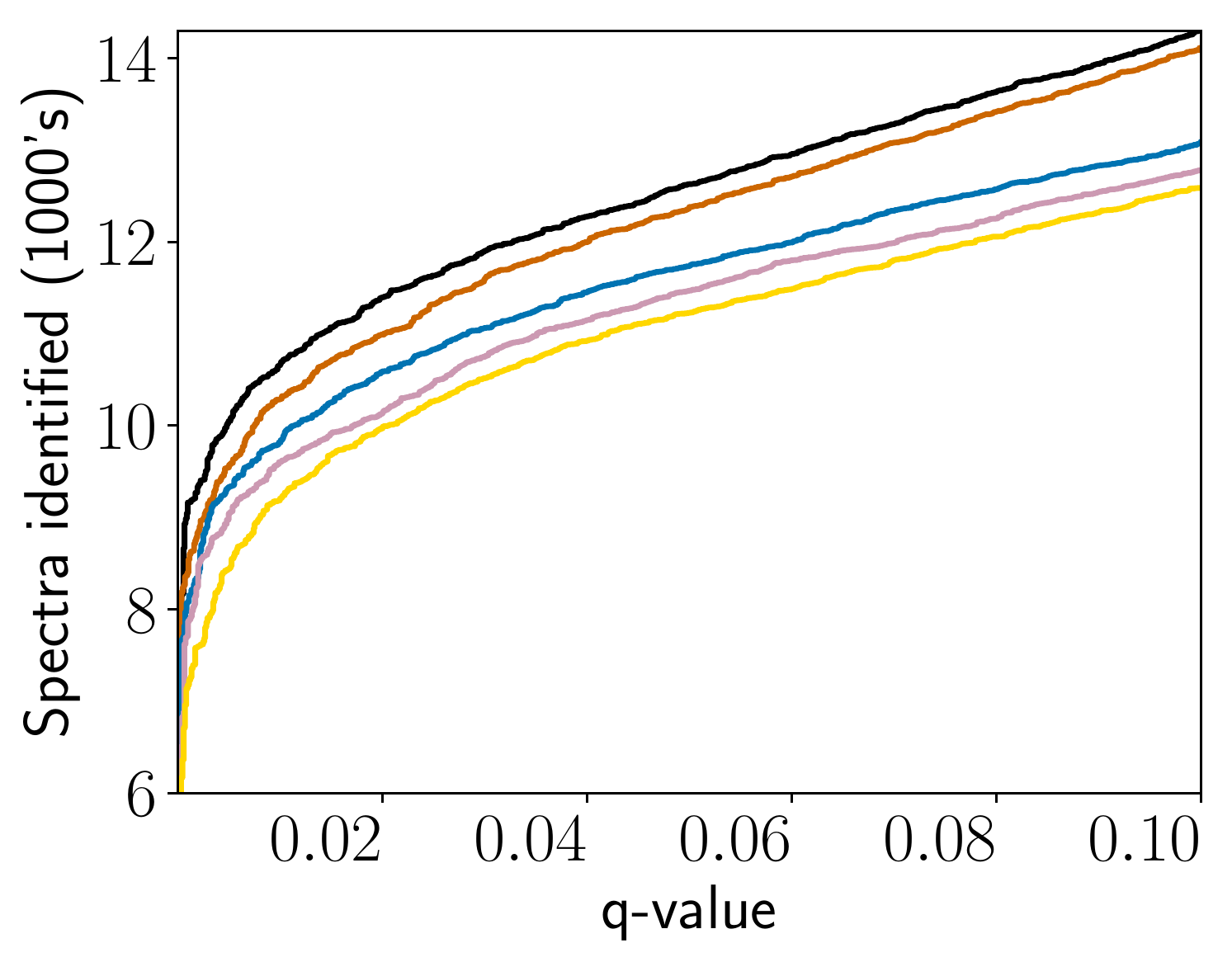}}
  \subfigure[Yeast-4]{\includegraphics[trim=0.45in 0.0in 0.0in 0.05in,
    clip=true,scale=0.3]{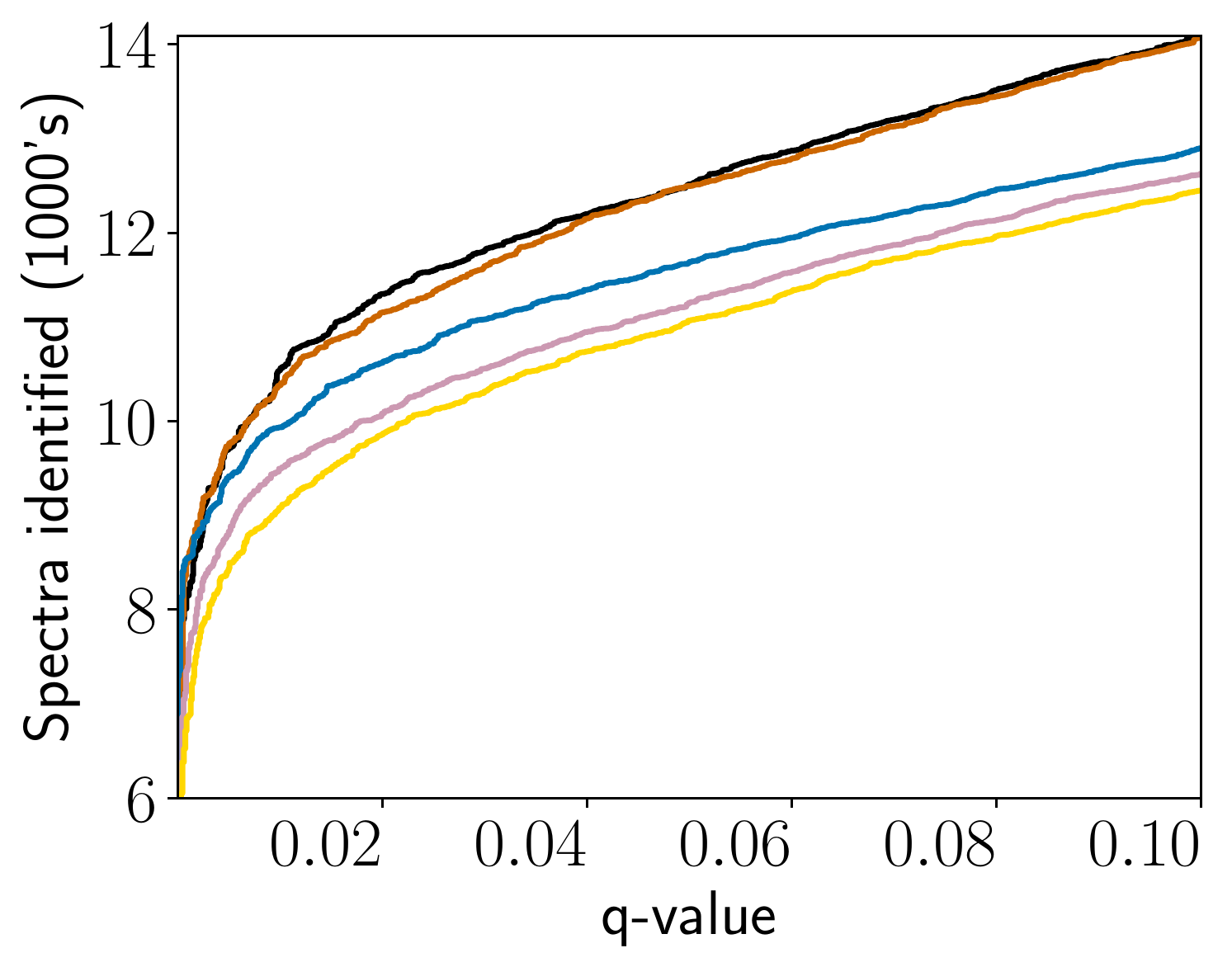}}
  \caption{{\small Post-database-search accuracy plots measured by
    $q$-value versus number of spectra identified for worm
    (\emph{C. elegans}) and yeast (\emph{Saccharomyces cerevisiae})
    datasets.  All methods are post-processed using the
    Percolator SVM classifier~\cite{kall:semi-supervised}.  ``DRIP
    Fisher''
    augments the standard set of DRIP PSM features (described
    in~\cite{halloran2016dynamic})  with the recently derived
    gradient-based DRIP features in~\cite{halloran2017gradients}.
    ``Didea Fisher'' uses the aforementioned DRIP features with the
    gradient features replaced by Didea's conditional log-likelihood
    gradients.
    ``XCorr,'' ``XCorr $p$-value,'' and ``MS-GF+'' use their standard
    sets of Percolator features (described
    in~\cite{halloran2016dynamic}).}
  }
  \label{fig:percolatorAbsRanking}
\end{figure*}

\end{document}